\newlength{\imagewidth}
\newtheorem{theorem}{Theorem}[section]
\newtheorem{lemma}[theorem]{Lemma}
\newtheorem{corollary}[theorem]{Corollary}
\theoremstyle{definition}
\theoremstyle{remark}
\theoremstyle{assumption}
\newtheorem{assumption}[theorem]{Assumption}
\numberwithin{equation}{section}
\newcommand{\abs}[1]{\lvert#1\rvert}
\begin{document}

\title{Note on linear response for interacting Hall insulators}

\author{Sven Bachmann}
\address{Department of Mathematics \\ University of British Columbia \\ Vancouver, BC V6T 1Z2 \\ Canada}
\email{sbach@math.ubc.ca}

\author{Alex Bols}
\address{ Instituut Theoretische Fysica, KU Leuven  \\
3001 Leuven  \\ Belgium }
\email{alexander.bols@kuleuven.be}

\author{Wojciech De Roeck}
\address{ Instituut Theoretische Fysica, KU Leuven  \\
3001 Leuven  \\ Belgium }
\email{wojciech.deroeck@kuleuven.be}

\author{Martin Fraas}
\address{ Instituut Theoretische Fysica, KU Leuven  \\
3001 Leuven  \\ Belgium }
\curraddr{Department of Mathematics \\ Virginia Tech \\ Blacksburg, VA 24061-0123 \\ USA}
\email{fraas@vt.edu}

\subjclass{}
\date{\today}

\dedicatory{}

\keywords{}

\begin{abstract}  
We relate explicitly the adiabatic curvature-in flux space- of an interacting Hall insulator with nondegenerate ground state to various linear response coefficients, in particular the Kubo response and the adiabatic response.  The flexibility of the setup, allowing for various driving terms and currents, reflects the topological nature of the adiabatic curvature. We also outline an abstract  connection between Kubo response and adiabatic response, corresponding to the fact that electric fields can be generated both by electrostatic potentials and time-dependent magnetic fields.
Our treatment fits in the framework of rigorous many-body theory, thanks to the gap assumption. 
\end{abstract}

\maketitle

\section{Introduction}

The Hall conductance is given by an \emph{adiabatic curvature},  related to the threading of two Aharonov-Bohm fluxes.  This insight originated with Niu, Thouless and Wu \cite{Thouless85}, see also the work of Avron and Seiler in \cite{AvronSeiler85}. Over the past years, it inspired a mathematically rigorous proof  \cite{HastingsMichalakis} by Hastings and Michalakis of quantization in the integer quantum Hall effect in the many-body context. 

 The goal of this note is not to sketch these developments but rather to review why the adiabatic curvature is indeed a  Hall response coefficient. This is  hence not a new insight, but we found it quite useful to phrase it in the language of modern many-body theory, using tools like quasi-adiabatic evolution and the like. 
 
A related question that one might want to see clarified is the rigorous justification of linear response \emph{per se}. While in general this remains an important problem of mathematical physics, it is under control in the case of Hall responses (exactly because these are non-dissipative responses), see  \cite{bru2016microscopic,Giuliani:2016gn,bachmann2018adiabatic}. This issue will however not be discussed here. 
%
%
%

\section{Setup}\label{sec:Setup}
\subsection{Spaces and operators} \label{sec: spaces}
We use very heavily the setup and notation from a recent paper of ours, namely \cite{bachmann2018quantization}.  We consider a two dimensional discrete torus $\Gamma=\Gamma_L=\bbZ^2_L$ with $\bbZ_L=\bbZ/(L\bbZ)$.  We take $L$ large and even and we often identify $\Gamma$  with the square $ \{ (x_1,x_2) \in \bbZ^2 \; : \;  -L/2 \leq x_{1,2}\leq L/2\}$,  with the appropriate identification of boundary points. 

A finite-dimensional Hilbert space $\bbC^{n}$ is associated to each site $x\in \Gamma$ and there is a preferred basis in $\bbC^n$ labelled by $\sigma$ (as an example, one can think of the $z$-spin number). 
We consider the fermionic Fock space $\caH=\caH_{\Gamma}$ built on the one-particle space $l^2(\Gamma,\bbC^n)$.
The algebra of operators $\caB(\caH)$ is generated by the creation/annihilation operators $c_{x,\sigma}/c^*_{x,\sigma}$: 
$$
\{c_{x,\sigma},c^*_{x',\sigma'} \}=\delta_{x,x'} \delta_{\sigma,\sigma'}, \qquad   \{c^{\sharp}_{x,\sigma},c^{\sharp}_{x',\sigma'} \}= 0
$$
where $\{A,B\}=AB+BA$ and $c^\sharp$ can be either $c$ or $c^*$. 
Any operator $O$ can be written in a unique way as a sum of normal-ordered monomials in $c^{\sharp}_{x,\sigma}$ which are at most of first degree in each $c^{\sharp}_{x,\sigma}$.  Referring to this unique representation, we write  $O_S$ for the `restriction to $S$', namely the sum of monomials in $O$ containing only $c^{\sharp}_{x,\sigma}$ with $x\in S$.  

For obvious reasons, we call $S$ the `spatial support' of $O_S$.   Also, we will consider only Hamiltonians and observables that are in the \emph{even} subalgebra,  i.e.\ they contain only monomials of even degree.   A direct consequence of this is that, for even $O,O'$ we have $[O_S,O'_{S'}]=0$ whenever $S\cap S'=\emptyset$. 
An oft-used operator is the  particle number at $x$, given by $n_x=\sum_{\sigma}   c^*_{x,\sigma}c_{x,\sigma}$ and the particle number in $X$, given by $n_X=\sum_{x\in X} n_x$. 

We will in general write $S^r$ for the neighborhood 
\begin{equation} \label{def: neighborhood}
S^r= \{ x|\, \mathrm{\dist}(x,S) \leq r\}
\end{equation}
Here the distance $\mathrm{\dist}(\cdot,\cdot)$ refers to the Euclidian distance on the underlying continuous torus $[-L/2,L/2]^2$ with opposite edges identified. 

Consider an observable $O_L$ on $\Gamma_L$ whose support $S$ fits inside a smaller square, say $|x_{1,2}| \leq L/4$ for all $x\in S$, then we can define a corresponding $O_{L'}$ on $\Gamma_{L'}$ for $L'>L$ by the identification of $\Gamma_L$ with a square. 
This realizes a natural embedding of $\caB(\caH_{\Gamma_L})$ into $\caB(\caH_{\Gamma_{L'}})$.  We  will use this to fix an observable $O$ and consider it implicitly for all (sufficiently large) $L$. For example Assumption \ref{tl limit} relies on this construction.

We also need another class of operators, representing Hamiltonians, currents, etc.  They are of the type $G=\sum_{X \in \Gamma} G_X$, with
\begin{enumerate} 
\item  $G_X=0$ unless $\mathrm{diam}(X) \leq R$ for some fixed range $R<\infty$.
\item   $\norm{G_X} \leq m$ for some fixed $m$. 
\end{enumerate}
For lack of a better name,  we call (the $L$-sequence of) $G$ a `local Hamiltonian' whenever the above conditions are satisfied for all $L$ with $m,R$ independent of $L$. 
Of course, one can devise a framework\footnote{The literature on mathematical statistical physics uses the framework of `interaction potentials', see e.g.\ \cite{simon2014statistical}} to consider `the same' $G$ for different $L$, but we will not need this explicitly. 
\subsection{The Hamiltonian}\label{sec: hamiltonians}
Our framework allows to consider rather arbitrary local Hamiltonians, but for the sake of simplicity, we restrict to a class with nearest neighbour hopping: 
$$
H = \sum_{\sigma,\sigma'}\sum_{x\sim x'} \alpha(x,\sigma,x',\sigma') c^*_{x,\sigma} c^{}_{x',\sigma'}  +   \sum_{X \subset \Gamma} B_X
$$
where $x\sim x'$ indicates that $x,x'$ are adjacent, and
\begin{enumerate}
\item  $\alpha(x,\sigma,x',\sigma')= \overline{\alpha(x',\sigma',x,\sigma)}$ to ensure Hermiticity. 
\item  $\sum_X B_X$ is a `local Hamiltonian' as defined above in Section \ref{sec: spaces}.
\item  All $B_X$ are Hermitian  and $[B_X, n_x]=0$ for any $x, X$.
\end{enumerate}
The conserved charge is 
$N=\sum_{x} n_{x}$, 
 i.e.\ for simplicity we assume unit charge per fermion. By $\mathrm{iii})$, we see that the $B_X$ don't contribute to charge transport.
The  natural choice for these $B_X$ is
$$B_{\{x\}} =  \mu n_x  + U n_x^2,\qquad  B_X=0 \text{ if $|X|>1$} $$ 
i.e.\  the Hubbard model with on-site interaction $U$ and  chemical potential $\mu$.
The main assumptions on the Hamiltonian are
\begin{assumption}\label{tilde gap}
$H$ has a non-degenerate ground state $\Psi$ separated from the rest of the spectrum by a distance ${g} >0$, uniformly in the size $L$.   
\end{assumption}
Let us write
$\omega(\cdot)=\langle\Psi,\cdot \Psi \rangle$ for the ground state expectation. Sometimes, as in the upcoming assumption, we need to recall that everything depends on $L$, so we may write $\omega(\cdot)=\omega_L(\cdot)$. 
\begin{assumption}\label{tl limit}
The ground state has a thermodynamic limit in a weak sense: for any observable $O$ with finite support, the limit $\lim_{L\to\infty}\omega_L(O)$ exists. (We used the identification in Section \ref{sec: spaces} of observables for different $L$ to give meaning to $\omega_L(O)=\omega_L(O_L) $)
\end{assumption}
These assumptions are assumed to hold throughout our text and we do not repeat them. That being said, Assumption \ref{tl limit} is only necessary for Lemma \ref{lem: existence of lr} and Theorem \ref{thm: quantization}.
In all what follows, we always mean that error terms, constants $C$, etc can be taken bounded independently of $L$. \\

\subsubsection{Example: interacting Harper model}
We take $n=1$, i.e.\ spinless fermions, so we omit the label $\sigma$. The hopping amplitudes $\alpha$ are specified as
\begin{equation}\label{def: harper}
\alpha(x,x') = \begin{cases} t \ep{\pm\iu  \Phi_L x_1} & \qquad  x_1=x'_1  \quad \text{and}\quad   (x'_2-x_2) \mathrm{mod} L=\pm 1   \\ 
t & \qquad    x_2=x'_2  \quad \text{and}\quad  (x'_1-x_1) \mathrm{mod} L=\pm 1     \end{cases} 
\end{equation}
where $\Phi_L \in 2 \pi \bbZ / L$ is the magnetic flux per unit cell and $t \in \bbR$ is the hopping strength.  Note that $\Phi_L \in 2 \pi \bbZ / L$ ensures that the hopping amplitudes are well defined on the $L \times L$ torus. The infinite volume Harper model\cite{Hofstadter76} is well-defined for all values of the flux $\Phi$ and Lesbegue a.e.\ $\Phi$ satisfy the following property: there is an open set $U\ni\Phi$ and a chemical potential $\mu$ such that, for every $\Phi'\in U$, $\mu$ lies outside of the spectrum of the Harper Hamiltonian.  Let $\Phi$ satisfy this property, then we can find a sequence of fluxes $\Phi_L \rightarrow \Phi$ such that the corresponding sequence of finite-volume Harper models statisfies assumptions \ref{tilde gap} and \ref{tl limit}. So far the non-interacting model. Persistence of gaps for weak interactions  was proven in \cite{hastings2017stability,DeRoeck2018} and also implicitly in \cite{Giuliani:2016gn}, and existence of the thermodynamic limit is standard in this context.


\subsection{Fluxes}\label{subsec:ham with flux}

\subsubsection{One-forms on $\Gamma$}
We want to `thread magnetic fluxes' through the loops of the torus $\Gamma$. These fluxes will be modelled using vector potentials, which we describe as discrete one-forms, \ie objects that can be integrated along oriented paths. The elements of an oriented path are the oriented edges which it traverses. A one-form is a function $A : \Gamma^{e} \rightarrow \bbR$ on the oriented edges of $\Gamma$ such that $A(e)$ flips sign if the orientation of $e$ is reversed. We write $\norm{A}:= \sup_{e}|A(e)|$.

The integral of $A$ along $\gamma$ is then
$$
\int_{\gamma} A := \sum_{e \in \gamma} A(e).
$$

Any function $\theta : \Gamma \rightarrow \bbR$ defines a one-form $\dd \theta$ by $\dd \theta \big( (x, y) \big) = \theta(y) - \theta(x)$. See the appendix for more details on discrete one-forms.

\subsubsection{Hamiltonian with vector potential}

Vector potentials are one-forms $A$.
A background vector potential $A$ is implemented by modifying the Hamiltonian in the following way:
$$
H \to H_A,\qquad \alpha(x,\sigma;x',\sigma') \to \alpha(x,\sigma;x',\sigma') \ep{\iu A \big( (x, x') \big) }.
$$
In practice, we do not need any additional\footnote{Such a flux might be included in the original Hamiltonian, see e.g.\  the Harper model in \eqref{def: harper}} magnetic fluxes piercing the lattice, so we will mostly restrict to vortex-free $A$ i.e.\ $\oint_\gamma A =0$ across loops $\gamma$ that are contractible to a point. 
 The implementation of a vector potential of the form $\dd \theta$ for some function $\theta: \Gamma \to \bbR$ amouts to a \emph{gauge transformation}
\begin{equation}\label{eq:H under gauge}
U_\theta H_A U^*_\theta = H_{A+\dd \theta}
\end{equation}
where
$$
U_\theta=\ep{\iu \langle \theta ,n \rangle },\qquad \langle \theta ,n \rangle\equiv \sum_x \theta(x) n_x.
$$
Consider now a one-form $A$ that is {exact} in the region $\Sigma\subset\Gamma$. By this we mean that $\oint_\gamma A=0$ for any $\gamma$, {not necessarily contractible}, consisting of oriented edges in $\Sigma^e$ (edges whose both vertices are in $\Sigma$). 
Then there exists a function $\theta$, with support in $\Sigma$, such that  
$$A\big|_{\Sigma^e}=  {\dd} \theta. $$
This in particular implies that
\begin{equation} \label{eq: ha locally gauge}
(H_{A})_{\Sigma}=  ( U_{\theta}H  U^*_{\theta})_{\Sigma}.
\end{equation}
If we identify gauge equivalent vector potentials, then there are only two independent nonzero vortex-free classes. A representant of the first (second) class is given by the vector potential $\xi_1$ ($\xi_2$) which takes the value $1/L$ on edges pointing in the positive $1$-direction ($2$-direction), and vanishes on edges pointing in the $2$-direction ($1$-direction). The point is that \emph{locally} the one-form $\xi_i$ is given by $\dd x_i/L$.

Let $\gamma_1,\gamma_2$ be two loops that wind around the torus across the lines $x_2=0,x_1=0$, respectively. Then
$$
\int_{\gamma_i} \xi_j = \delta_{ij}.
$$
Any vector potential of the form $\phi_1 (\xi_1 + \dd \theta_1)+ \phi_2 (\xi_2 + \dd \theta_2)$ describes magnetic fluxes $(\phi_1,\phi_2)$ threaded through the torus, with no magnetic fields on the torus, see Figure ~\ref{fig:fluxes}

\begin{figure}[h]
\centering
\includegraphics[width=0.3\textwidth]{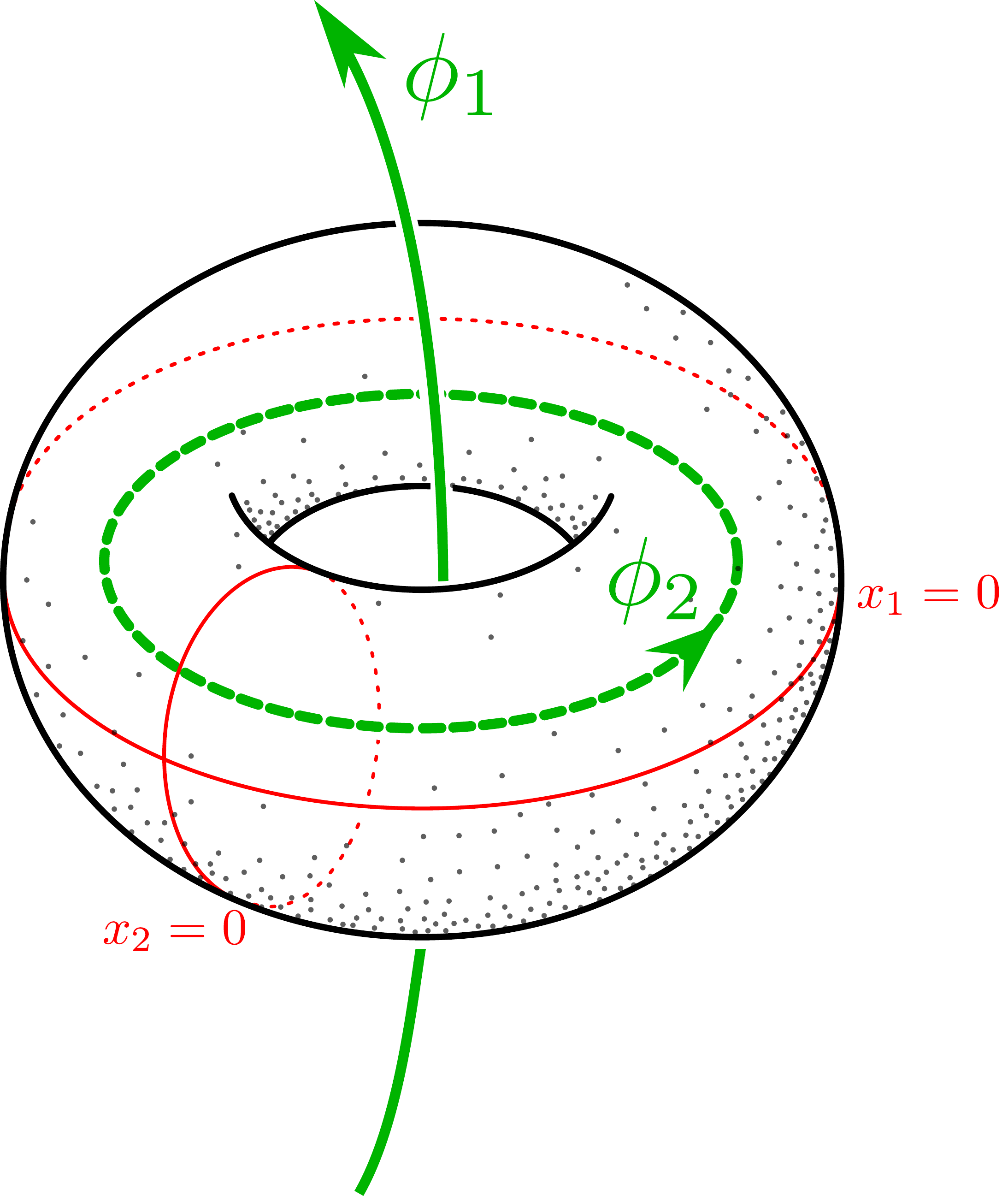}
\caption{The torus $\Gamma$ with threaded fluxes $\phi_1$ and $\phi_2$.}
\label{fig:fluxes}
\end{figure}

\subsection{Current operators}\label{subsec:current operators}

Let us define current, related to the flow of the conserved charge $N$.  
For any connected region $X\subset \Gamma$, the instantaneous change of $n_X$ is given by 
$$
J_{\partial X} = \iu [H,n_X]
$$ 
%

\begin{figure}
\centering
\captionsetup[subfigure]{width=\imagewidth,justification=raggedright}%
  \begin{subfigure}[b]{0.4\textwidth}
      \includegraphics[width=0.9\textwidth]{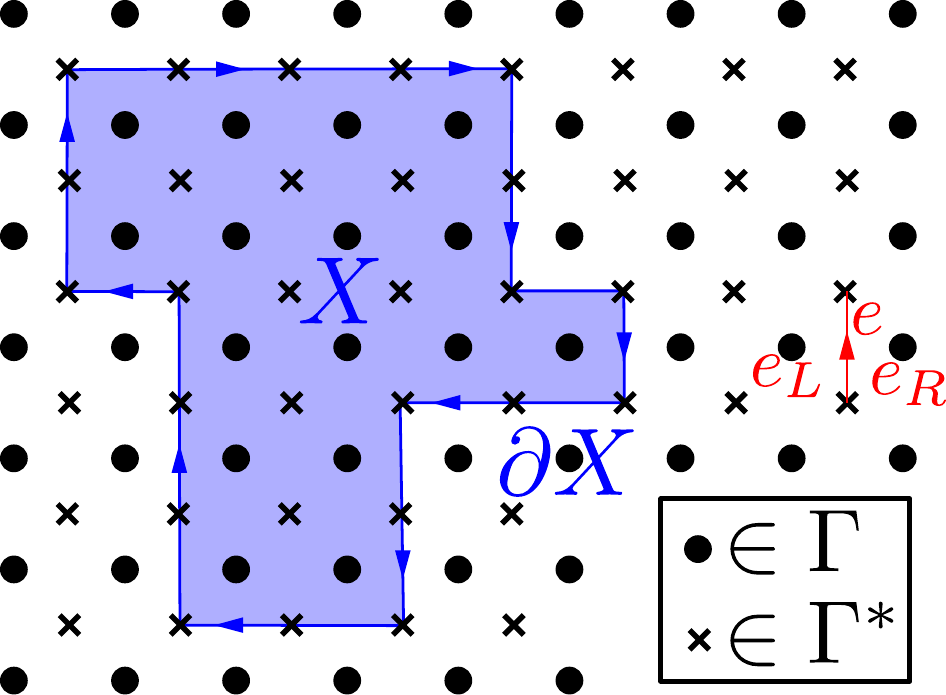}
      \caption{The boundary of the set $X$ as an oriented path, and an edge $e$ together with the vertices $e_L$ and $e_R$ which it passes.}
      \label{fig:set_boundaries_edges}
  \end{subfigure}
  \begin{subfigure}[b]{0.4\textwidth}
      \includegraphics[width=0.8\textwidth]{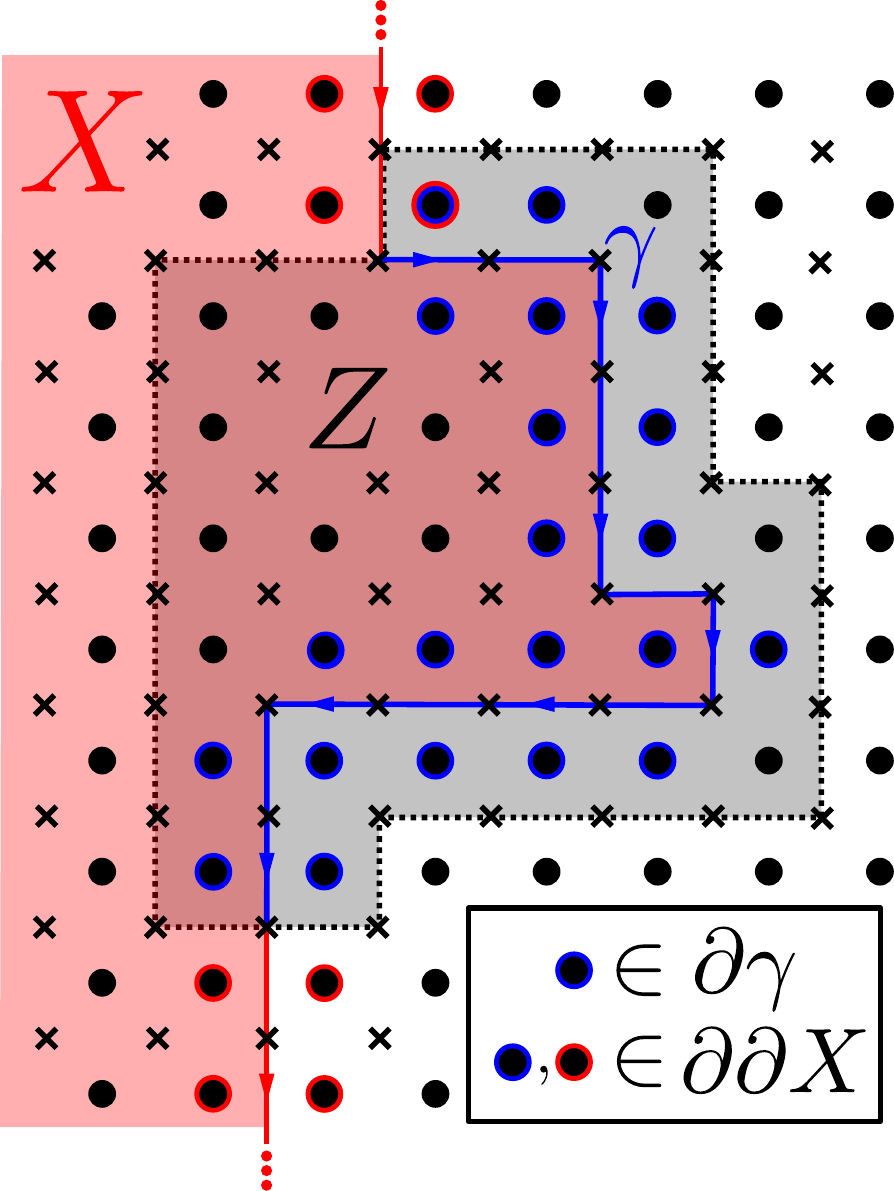}
      \caption{The path $\gamma$ as a restriction of the boundary of $X$.}
      \label{fig:current as region current}
  \end{subfigure}
  \caption{}
\end{figure}

and so it is natural to interpret $J_{\partial X}$ as the current operator through the non-intersecting oriented loop $\partial X$ in the dual lattice $\Gamma^*$. By convention, we orient $\partial X$ in a `counter-clockwise' fashion, \ie when walking along $\partial X$, one sees the set $X$ to the right, see Figure \ref{fig:set_boundaries_edges}.
Moreover, since $J_{\partial X}$ is a sum of local operators situated in a close vicinity of $\partial X$, we can also associate in a natural way a current $J_\gamma$ to every oriented subpath $\gamma$ of $\partial X$.

The ambiguity in doing this amounts to an operator of norm at most $C(R)$ at the ends of $\gamma$, with $R$ the range of the Hamiltonian (actually, only the range of the hopping term would enter here).   Therefore, $J_\gamma$ will be meaningful whenever $|\gamma| \gg C(R)$.  For the sake of explicitness, we give a possible choice. Note first that each oriented edge $e$ of $\Gamma^*$ is uniquely specified by giving the site $e_L$, which lies just to the left of $e$, and the site $e_R$, which lies just to the right of $e$. We set
$$
J_\gamma =   - \iu \sum_{e \in \gamma} \left( \alpha(e_L, e_R) c_{e_R}^* c_{e_L} - \alpha(e_R, e_L) c_{e_L}^* c_{e_R} \right).
$$

The formalism of gauge transformations offers us a handy way to write $J_\gamma$.  Write $\partial\gamma := \cup_{e \in \gamma} \{ e_L, e_R \}$ for the vertices passed by $\gamma$. 
The idea is to find a region $X$ such that $\gamma$ is a subpath of $\partial X$ and to write
$J_\gamma$ as a (spatial restriction of) the current into $X$, i.e.\footnote{this formula might be confusing.  The subscript $Z$ was defined canonically in Section \ref{sec: spaces} as a restriction to a spatial region $Z \subset \Gamma$. In contrast,  $J_{\gamma}$ is simply the current associated to the path $\gamma$ in $\Gamma^*$.}
$$(J_{\partial X})_{Z}= J_{\gamma}$$
for a region $Z$ that selects exactly the right part of $\partial X$. To be precise, $Z$ has to satisfy  $Z \supset \partial \gamma$ and  $(\partial\partial X \setminus \partial \gamma) \cap Z=\emptyset$, see Figure \ref{fig:current as region current}.  

Therefore we have also
\begin{equation} \label{eq: current as region current}
J_\gamma =   \partial_\phi ( \ep{\iu \phi n_X} H \ep{\iu \phi n_X}  )_{Z}  \big|_{\phi=0} = \iu ([  n_X,  H] )_{Z}
\end{equation}
which relates current operators to flux threading.  The last equality is a consequence of the fact that the restriction to a spatial region is a linear map on operators. 

%
%


\section{Response coefficients}

\subsection{Kubo Linear response} \label{subsec:linear response}
The Kubo linear response coefficient $\chi_{J,V}(\nu) $ at frequency $\nu$, describes the response of an observable $J$ to adding a perturbation $\ep{\iu \nu t} V$ to the Hamiltonian starting at $t=0$  \cite{Kubo:1957cl}. We simply start from the well-known expression for the response coefficient:
\begin{equation}\label{eq:finite volume LR}
\chi_{J,V}(\nu) := \iu \lim_{\epsilon \rightarrow 0^+} \int_0^{\infty} \dd t \; \omega \left( [V(-t), J] \right)  \ep{\iu \nu t-\epsilon t},\qquad \text{and $\chi_{J,V} =\chi_{J,V}(0) $}
\end{equation}
where $V(t)= \ep{\iu t H} V \ep{-\iu t H}$.
We should immediately add that it is often crucial to take the thermodynamic limit $L\nearrow \infty$ \emph{before} taking $\epsilon \rightarrow 0^+$. However, for gapped systems (as we are considering) these limits commute:
\begin{lemma}\label{lem: existence of lr}
Let $|\nu| \leq g/2$ (recall that $g$ is the spectral gap). If both $J,V$ are operators with finite support, then
\begin{equation} \label{eq:linear response coefficient}
 \iu \lim_{\epsilon \rightarrow 0^+} \lim_{L\to\infty}  \int_0^{\infty} \dd t \;  \omega \left( [V(-t), J] \right) \ep{\iu \nu t-\epsilon t}
\end{equation}
exists and equals the $L\nearrow \infty$ limit of  \eqref{eq:finite volume LR}. 
\end{lemma}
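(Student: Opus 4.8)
The strategy is to split the time integral at a scale $T$ growing logarithmically in $L$, and control the two pieces differently. The key input is the gap: by Assumption~\ref{tilde gap} the ground state is separated by $g>0$ from the rest of the spectrum, so that for any finite-support observables $J,V$ the correlation function $f_L(t):=\omega_L([V(-t),J])$ decays. More precisely, inserting the spectral projection $P=|\Psi\rangle\langle\Psi|$ and its complement, one writes $\omega_L([V(-t),J])=\langle\Psi,V\,\ep{-\iu t(H-E_0)}(1-P)\,J\Psi\rangle - \langle\Psi,J\,\ep{\iu t(H-E_0)}(1-P)\,V\Psi\rangle$, where $E_0$ is the ground-state energy. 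Both terms are, for each $t$, boundary terms of an oscillatory integral whose frequencies lie in $[g,\infty)$; since $|\nu|\le g/2$, the combined phase $\ep{\iu\nu t}\ep{\pm\iu t(H-E_0)}$ still has its spectrum bounded away from zero by $g/2$. Hence the regularized integral $\iu\int_0^\infty \ep{\iu\nu t-\epsilon t} f_L(t)\,\dd t$ can be computed exactly for each $L$: it equals $-\langle\Psi,V(H-E_0-\nu-\iu\epsilon)^{-1}(1-P)J\Psi\rangle + \langle\Psi,J(H-E_0+\nu-\iu\epsilon)^{-1}(1-P)V\Psi\rangle$, and the $\epsilon\to 0^+$ limit is harmless because the resolvents are bounded by $2/g$ uniformly. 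So \eqref{eq:finite volume LR} equals $-\langle\Psi,V(H-E_0-\nu)^{-1}(1-P)J\Psi\rangle + \langle\Psi,J(H-E_0+\nu)^{-1}(1-P)V\Psi\rangle$, with $\|(H-E_0\mp\nu)^{-1}(1-P)\|\le 2/g$ uniformly in $L$.

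Once the $\epsilon$-limit has been taken at finite $L$ and shown to be uniformly bounded, the problem reduces to showing that this resolvent expression converges as $L\to\infty$ and that the double limit in \eqref{eq:linear response coefficient} (with $\epsilon\to 0^+$ taken last, but now innocuous) gives the same number. For the convergence in $L$, the natural tool is a Combes--Thomas / Lieb--Robinson estimate: the resolvent $(H-E_0\mp\nu)^{-1}(1-P)$ has exponentially decaying matrix elements in space (because $\nu$ is in the gap), so $(H-E_0\mp\nu)^{-1}(1-P)J\Psi$ is, up to exponentially small error in $L$, supported in a bounded neighbourhood of $\mathrm{supp}\,J$; equivalently one writes this vector via the integral representation $-\iu\int_0^\infty \ep{\iu(\nu\pm\iu 0)t}\ep{\mp\iu t(H-E_0)}(1-P)J\Psi\,\dd t$ and truncates at $t\sim c\log L$, using a Lieb--Robinson bound to say the truncated integrand is a finite-support observable acting on $\Psi$ up to $O(L^{-\infty})$ error. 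Then Assumption~\ref{tl limit} applies to those finite-support observables and yields convergence of $\omega_L$ of the relevant products. Combining, the finite-$L$ expression converges, and one identifies the limit with \eqref{eq:linear response coefficient} by running the same truncation argument with the limits in the other order.

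Concretely the steps are: (1) rewrite $\omega_L([V(-t),J])$ in spectral form and perform the $t$- and $\epsilon$-integrals exactly to get the resolvent formula, using $|\nu|\le g/2$; (2) bound the resolvents by $2/g$ uniformly in $L$, so all quantities are finite and $O(1)$; (3) use Combes--Thomas to localize $(H-E_0\mp\nu)^{-1}(1-P)$ spatially on scale $O(1)$ with exponential tails, or equivalently truncate the time integral at $T\sim c\log L$ with a Lieb--Robinson bound controlling the tail; (4) invoke Assumption~\ref{tl limit} on the resulting finite-support observables to get convergence as $L\to\infty$; (5) check the two orders of limits agree, which is immediate once everything is expressed through the (uniformly convergent) truncated integrals. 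The main obstacle is step~(3)--(4): making precise that a resolvent at an energy in the gap can be replaced, up to $L$-independent-exponentially-small error, by a strictly finite-range operator, so that Assumption~\ref{tl limit} — which only speaks about fixed finite-support observables — can be brought to bear. This is exactly where the gap is used twice: once to make the $\epsilon\to 0$ limit trivial, and once to get the exponential spatial decay that feeds the thermodynamic-limit assumption.
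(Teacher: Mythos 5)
Your plan is correct and follows essentially the same route as the paper: compute the $\epsilon$-regularized integral exactly as a resolvent expression off the ground state, use $|\nu|\le g/2$ to get a bound and an $O(\epsilon)$ convergence rate uniform in $L$, obtain the $L\to\infty$ limit of the $\epsilon=0$ expression by spatially localizing the gapped resolvent so that Assumption~\ref{tl limit} applies, and then commute the limits via the uniform estimate. The only cosmetic difference is that the paper localizes the resolvent by rewriting $\chi(L)=\iu\int \dd t\, W(t)\ep{\iu\nu t}\omega([V(-t),J])$ with the rapidly decaying filter $W$ of Section~\ref{sec: preliminaries}, whereas you propose Combes--Thomas or a $\log L$ time truncation; these achieve the same end.
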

We will hence consider always \eqref{eq:finite volume LR}  but we stress that the commutativity of limits exhibited in Lemma \ref{lem: existence of lr} actually precludes\footnote{Indeed, let $f_L(t)=\omega_L \left( [V(-t), J] \right)$ and assume that $\lim_{L\to\infty} f_L(t)$ exists and is an integrable function $f(t)$.  Then $\chi(\nu)=2\pi \iu \hat f(\nu)+ \caP \int \dd \nu' \frac{\hat{f}(\nu')}{\nu'-\nu}$,  with $\caP\int \ldots$ denoting the principal part. 
The real and imaginary part are sometimes also called the `dissipative part' and the `reactive part' of the response. 
However, taking the other order of limits, we find that either the imaginary part is zero, or the limit does not exist.} any dissipative effect.

One of the features of the Kubo response that we will rely on, is its \emph{locality}, made explicit in the following lemma. 
\begin{lemma}\label{lem: locality of lr}
Let $|\nu| <g/2$.
Let $J, V$ be local Hamiltonians in the sense of Section \ref{sec: spaces} and assume that the region $Z$ is the intersection of their supports.     Then
\begin{equation}
\chi_{J,V}(\nu)- \chi_{J_{Z^r},V_{Z^r}}(\nu) = \caO(r^{-\infty})
\end{equation}
with $Z^r$ as defined in \eqref{def: neighborhood}.  Actually, $\chi_{J,V}= \iu \: \omega([\caI(J),V])$ (with $\caI$ defined in Section \ref{sec: preliminaries}) which renders this locality explicit. 
\end{lemma}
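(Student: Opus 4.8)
The plan is to prove Lemma~\ref{lem: locality of lr} in two movements: first establish the claimed closed form $\chi_{J,V}(\nu) = \iu\,\omega([\caI_\nu(J),V])$ for a suitable ``inverse Liouvillian'' map $\caI_\nu$, and then read off the locality statement from the quasi-local nature of $\caI_\nu$. The operator $\caI_\nu$ should be the object that, on the spectral data of $H$, acts as $\caI_\nu = \int_0^\infty \dd t\, \ep{\iu\nu t - \epsilon t}\, \tau_{-t}(\cdot)$ regularized so that it inverts $\mathrm{ad}_H$ away from the zero mode; concretely $\caI_\nu(J)$ is characterized by $\omega([\caI_\nu(J),V]) = \lim_{\epsilon\to 0^+}\int_0^\infty \dd t\, \omega([\tau_{-t}(V),J])\,\ep{\iu\nu t-\epsilon t}$ together with the cyclicity of $\omega$ in the gap. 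The condition $|\nu|<g/2$ is what makes the regularization converge and the zero mode the only obstruction: inserting the spectral decomposition of $H$ around $\Psi$, the time integral produces factors $1/(\iu(E_m - E_0) + \iu\nu - \epsilon)$ with $|E_m - E_0|\geq g > |\nu|$, so the $\epsilon\to 0^+$ limit is finite with no boundary contribution, and one gets a norm bound $\norm{\caI_\nu(J)} \leq C(g,\nu)\norm{J}$.

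First I would make the identity $\chi_{J,V}(\nu) = \iu\,\omega([\caI_\nu(J),V])$ precise. Using $\omega(\tau_t(\cdot)) = \omega(\cdot)$ (ground-state invariance) and the KMS/cyclicity relation for the gapped ground state, rewrite $\omega([\tau_{-t}(V),J]) = \omega([V,\tau_t(J)]) \cdot(\text{sign})$ so that the time integral hits $J$ rather than $V$; then $\caI_\nu(J) := \iu\lim_{\epsilon\to 0^+}\int_0^\infty \dd t\, \tau_t(J)\,\ep{\iu\nu t - \epsilon t}$ (up to the precise placement of $\iu$'s and signs to match \eqref{eq:finite volume LR}), and the finite-support/finite-$L$ version is unambiguous. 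Crucially, because this manipulation stays inside a fixed finite volume $L$ and uses only the gap, it commutes with the operations below; the thermodynamic-limit subtleties of Lemma~\ref{lem: existence of lr} are not needed here since we work at fixed $L$ with $L$-independent bounds.

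Next, the locality. The map $\caI_\nu$ is quasi-local: $\tau_t$ is quasi-local with Lieb--Robinson velocity $v$, and the weight $\ep{\iu\nu t - \epsilon t}$ alone would only give a marginal decay, so the genuine mechanism must be the gap — one replaces the naive $\int_0^\infty \dd t\, \tau_t(\cdot)\ep{\iu\nu t}$ by the standard quasi-adiabatic-type representation with a rapidly decaying weight function $w_\gamma(t)$ (as in Hastings' construction, available since $|\nu| < g/2 \leq g$), so that $\caI_\nu(J) = \int \dd t\, w(t)\,\tau_t(J)$ with $w$ decaying faster than any polynomial. Combining this with Lieb--Robinson bounds gives $\norm{\caI_\nu(J) - \caI_\nu(J)_{S_J^r}} = \caO(r^{-\infty})$ where $S_J$ is the support of $J$ (since the decay of $w$ beats the linear light-cone growth). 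Then
\begin{equation}
\chi_{J,V}(\nu) = \iu\,\omega([\caI_\nu(J),V]) = \iu\,\omega([\caI_\nu(J)_{S_J^r}, V_{S_V^r}]) + \caO(r^{-\infty}),
\end{equation}
using that $\caI_\nu(J)$ and $V$ are close to their restrictions and that commutators of even operators with disjoint supports vanish, so only the part of each operator within distance $\sim r$ of $Z = S_J \cap S_V$ contributes. Finally, matching $\iu\,\omega([\caI_\nu(J)_{Z^r}, V_{Z^r}])$ back to $\chi_{J_{Z^r}, V_{Z^r}}(\nu)$ via the same identity (run in reverse for the truncated operators) yields the stated estimate. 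The main obstacle is the second step: showing that $\caI_\nu$ genuinely has super-polynomially decaying kernel rather than merely exponential-in-$\nu t$ decay — this requires invoking the gapped weight-function technology and being careful that the regularization $\epsilon\to 0^+$ does not spoil the decay, i.e. that one may pass from the $\ep{-\epsilon t}$ cutoff to the smooth rapidly-decaying kernel before sending $\epsilon\to 0$. Everything else is Lieb--Robinson bookkeeping and the support geometry in Figure~\ref{fig:current as region current}-style arguments.
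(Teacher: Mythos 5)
Your proposal is correct and follows essentially the same route as the paper: it establishes the closed-form expression $\chi_{J,V}(\nu)=\iu\,\omega([\caI(\cdot),\cdot])$ via the spectral gap (the paper does this in the proof of Lemma \ref{lem: ep then l to zero}, replacing the $\ep{-\epsilon t}$-regularized integral by the rapidly decaying weight $W(t)$ with $\widehat W(\zeta)=-\iu/(\sqrt{2\pi}\zeta)$ on $|\zeta|\geq g/2$), and then deduces locality from the super-polynomial decay of that weight combined with Lieb--Robinson bounds and a decomposition into local terms, which is exactly the content of Lemma \ref{lem: locality with cai}. The step you flag as the main obstacle --- that the kernel decays faster than any polynomial rather than merely marginally --- is precisely what the paper's choice of $W$ (the Hastings-type filter function) supplies, so your identification of the mechanism is the right one.
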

%

\subsection{Adiabatic response}\label{subsec:adiabatic driving}

Since this setup is less familiar to most readers, we sketch how it is derived from fundamental considerations. 
Consider a family of Hamiltonians
$H_s$ for $s\in [-1,1]$ with uniformly gapped groundsates $\Psi_s$. We require that the map $s\mapsto H_s$ is smooth and that $\partial^n_s H_s=0$ at $s=-1$, for all $n$. 
Now, to put ourselves in the adiabatic regime, the parameter $s$ is varied slowly: the Hamiltonian in physical time $t$ is given by $H^\epsilon(t) := H_{\epsilon t}$. Write $\Psi^\epsilon_{t}$ for the solution to the time-dependent Schr{\"o}dinger equation (TSE)
$$
\iu \: \partial_t \Psi^\epsilon_{t} = H^\epsilon(t) \Psi^\epsilon_{t},   \qquad \text{with initial condition}\quad  \Psi^\epsilon_{-1/\epsilon}=\Psi_{-1}.
$$   
The adiabatic response of some local observable $J$ at parameter $s$ is then defined as the difference between the solution of the TSE and the instantaneous ground state:
\begin{equation}\label{def:adiabatic response}
\chi^{\mathrm{ad}}_{J,H_s} := \lim_{\epsilon \rightarrow 0} \tfrac{1}{\epsilon} \left({\langle \Psi^\epsilon_{s/\epsilon}, J \; \Psi^\epsilon_{s/\epsilon} \rangle - \langle \Psi_{s}, J \; \Psi_{s} \rangle}\right).
\end{equation}
Of course, the same remark about the thermodynamic limit as in Section \ref{subsec:linear response} applies here and we do not comment on that further.  From now on, we will always choose $s=0$ in \eqref{def:adiabatic response}.
Let us give now heuristically evaluate \eqref{def:adiabatic response}. Since the state is close to the instantaneous ground state, let us pretend that they are exactly equal at time $t=0$ and evaluate the difference at $t\gg 1$, but $t$ not growing with $\epsilon$, such that $t/\epsilon$ still morally corresponds to taking $s=0$ in \eqref{def:adiabatic response}. In other words we look at  
$$
\lim_{t\to \infty} \; \lim_{\epsilon \rightarrow 0} \tfrac{1}{\epsilon} \left({\langle \Psi^\epsilon_{t}, J \; \Psi^\epsilon_{t} \rangle - \langle \Psi_{0}, J \; \Psi_{0} \rangle}\right),   \qquad \text{started from}\quad   \Psi^\epsilon_{0} :=\Psi_{0}
$$
 The advantage of doing so is that only the values of $H_s$ near $s=0$ seem to matter.   We expand $H^\epsilon(t)$ around $t=0$, where $H^\epsilon(0)=H_0$, obtaining 
\begin{equation}
H^\epsilon(t)=H_0+ (H^\epsilon(t)-H_0) \approx H_0 + \epsilon t W,\qquad W=\partial_s H_s\big|_{s=0}.
\end{equation}
What we have gained is that the setup now looks very much like the setup of the Kubo response formula: We start at $t=0$ in the ground state and we switch on a time-dependent driving, with the time-dependence being linear.  In this setup one derives the Kubo response formula by making a Dyson expansion of the dynamics, up to first order in $\epsilon$ and taking $t\to \infty$ (after introducing a regularization $\ep{-\delta t}$).  Doing this, we arrive at 
\begin{equation}\label{eq:LR form of adiabatic response}
\chi^{\mathrm{ad}}_{J,H_s} = \iu \lim_{\delta \rightarrow 0^+} \int_0^{\infty} \dd t \; t \omega \left( [W(-t), J] \right) \ep{-\delta t}.
\end{equation}
By standard Fourier techniques and renaming\footnote{We used $\delta$ above to avoid confusion with the unrelated $\epsilon$ in \eqref{def:adiabatic response}.} $\delta\to \epsilon$, this leads to
\begin{equation}\label{eq:adiabatic response is derivative of linear response}
\chi^{\mathrm{ad}}_{J,H_s} =  \lim_{\epsilon \rightarrow 0^+} \frac{\partial}{\partial \nu}  \int_0^{\infty} \dd t   \; \ep{\iu \nu t} \omega \left( [W(-t), J] \right) e^{-\epsilon t}\big|_{\nu=0} = -\iu  \frac{\partial}{\partial \nu} \chi_{J,W}(\nu)\big|_{\nu=0}
\end{equation}
Therefore, this heuristic treatment suggests that the adiabatic response is directly related to the Kubo linear response. 
Indeed, using the adiabatic perturbation theory in \cite{bachmann2017adiabatic}, we have
\begin{lemma}\label{mhw}
Assume that Assumption \ref{tilde gap} holds for all $s \in [-1,1]$ uniformly, that $H_s$ are local Hamiltonians whose parameters $m,R$ can be chosen uniformly in $s$ and that all local terms $H_{s,X}$ are smooth, uniformly in $X$. Then \eqref{eq:LR form of adiabatic response} holds true. 
\end{lemma}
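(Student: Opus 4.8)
The plan is to turn the heuristic Dyson computation preceding the statement into a rigorous argument by feeding it into the quantitative adiabatic theorem of \cite{bachmann2017adiabatic}, and then to perform the Fourier rewriting that produces the right-hand side of \eqref{eq:LR form of adiabatic response}. I would split the work into two steps: (A) a controlled expansion, to first order in $\epsilon$, of $\langle \Psi^\epsilon_{s/\epsilon}, J\,\Psi^\epsilon_{s/\epsilon}\rangle$ at $s=0$, with a remainder that is genuinely $o(\epsilon)$ and uniform in $L$, so that the limit defining $\chi^{\mathrm{ad}}_{J,H_s}$ in \eqref{def:adiabatic response} exists and equals the first-order coefficient; (B) the identification of that coefficient with $\iu\lim_{\delta\to0^+}\int_0^\infty\dd t\;t\,\omega([W(-t),J])\ep{-\delta t}$.

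For step (A) I would invoke the super-adiabatic construction of \cite{bachmann2017adiabatic}. Under the present hypotheses --- a gap uniform in $L$ and in $s\in[-1,1]$, local Hamiltonians with range and norm uniform in $s$, and smoothly $s$-dependent local terms --- that machinery produces a family of almost-stationary (super-adiabatic) states $\rho^\epsilon_s$, obtained by quasi-adiabatic continuation of the ground state $\Psi_s$, with the properties that $\langle \Psi^\epsilon_{s/\epsilon}, J\,\Psi^\epsilon_{s/\epsilon}\rangle = \rho^\epsilon_s(J) + \caO(\epsilon^\infty)$ for every local $J$, uniformly in $L$, and that $\rho^\epsilon_s(J)$ has an asymptotic expansion in $\epsilon$ with zeroth-order term $\langle\Psi_s, J\,\Psi_s\rangle$ and first-order term built from the quasi-adiabatic generator $\caI_{H_s}(\partial_s H_s)$ (i.e.\ the operator $\caI$ of Section \ref{sec: preliminaries} applied to $\partial_s H_s$). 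Here the role of the assumption $\partial_s^n H_s\big|_{s=-1}=0$ is decisive: it forces the boundary (transient) contributions --- which in general enter at order $\epsilon$ and oscillate with $\epsilon$ --- to vanish, so the remainder in \eqref{def:adiabatic response} is $o(\epsilon)$ and $\chi^{\mathrm{ad}}_{J,H_s}=\partial_\epsilon\rho^\epsilon_s(J)\big|_{\epsilon=0}$.

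For step (B) I would evaluate $\partial_\epsilon\rho^\epsilon_0(J)\big|_{\epsilon=0}$ in the eigenbasis of $H_0$. Differentiating $H_s\Psi_s=E_0(s)\Psi_s$ gives $P_0^\perp\,\partial_s\Psi_s\big|_{s=0}=-(H_0-E_0)^{-1}P_0^\perp W\Psi_0$, where $P_0^\perp=1-|\Psi_0\rangle\langle\Psi_0|$ and $W=\partial_sH_s\big|_{s=0}$, so the first-order coefficient equals $2\,\mathrm{Re}\langle\Psi_0, J\,\varphi_0\rangle$ with $\varphi_0$ proportional to $(H_0-E_0)^{-2}P_0^\perp W\Psi_0$; it is thus a sum over excited states $n$ of products of matrix elements of $W$ and $J$ weighted by $\Delta_n^{-2}$, with $\Delta_n\ge g$. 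On the other hand, inserting the spectral resolution of $H_0$ into $\omega([W(-t),J])$ and using $\int_0^\infty\dd t\;t\,\ep{\pm\iu\Delta t-\delta t}=(\delta\mp\iu\Delta)^{-2}$, whose $\delta\to0^+$ limit exists because $\Delta\ge g>0$, one finds that $\iu\lim_{\delta\to0^+}\int_0^\infty\dd t\;t\,\omega([W(-t),J])\ep{-\delta t}$ is exactly the same sum. This is the same Fourier computation that, in the text, connects \eqref{eq:LR form of adiabatic response} with \eqref{eq:adiabatic response is derivative of linear response}; the gap is what makes $\nu\mapsto\chi_{J,W}(\nu)$ differentiable at $\nu=0$ and all the interchanges of limits legitimate, cf.\ Lemma \ref{lem: locality of lr}. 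Combining with step (A) yields \eqref{eq:LR form of adiabatic response}.

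I expect step (A) to be the main obstacle. The subtle point is not locality per se but that the order-$\epsilon$ remainder must be shown to be $o(\epsilon)$ --- and uniformly in $L$ --- rather than a bounded $\caO(\epsilon)$ oscillating term; without this the limit in \eqref{def:adiabatic response} need not exist, in the same spirit as the remark following Lemma \ref{lem: existence of lr}. Both features are precisely what the quantitative adiabatic theorem of \cite{bachmann2017adiabatic} supplies: Lieb--Robinson bounds propagate the locality that gives uniformity in $L$, and the flat boundary condition at $s=-1$ kills the would-be oscillatory $\caO(\epsilon)$ contribution. Granting (A), step (B) is a routine exercise in the spectral calculus of $H_0$.
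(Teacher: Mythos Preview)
Your proposal is correct and, for step~(A), matches the paper exactly: both simply quote the output of \cite{bachmann2017adiabatic}, namely $\chi^{\mathrm{ad}}_{J,H_s}=\iu\,\omega([\caI(K),J])$ with $K=\caI(W)$, the uniform gap and locality hypotheses guaranteeing the required uniformity in $L$.

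For step~(B), your route genuinely differs from the paper's. You identify both sides with the spectral sum $\sum_{n\ne 0}\Delta_n^{-2}(\cdots)$ by inserting the eigenbasis of $H_0$ and using $\int_0^\infty t\,\ep{\pm\iu\Delta t-\delta t}\dd t=(\delta\mp\iu\Delta)^{-2}$; the gap makes the $\delta\to0^+$ limit unproblematic. The paper instead stays at the operator level and avoids any spectral decomposition: it observes, via Lemma~\ref{lem:I is inverse}(i,iv), that $t\mapsto\omega([\caI(O(-t)),O'])$ is a primitive of $t\mapsto\omega([O(-t),O'])$, and then integrates the right-hand side of \eqref{eq:LR form of adiabatic response} by parts twice to land on $\iu\,\omega([\caI(\caI(W)),J])$. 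Your approach is more elementary and self-contained; the paper's is shorter and keeps the role of the map $\caI$ (which is used throughout Section~\ref{sec: preliminaries} and in the other proofs) in the foreground.

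One small exposition wrinkle in your step~(B): the sentence about differentiating $H_s\Psi_s=E_0(s)\Psi_s$ yields one power of the reduced resolvent, i.e.\ $\partial_s\Psi_s$, which is the change of the \emph{instantaneous} ground state --- precisely the piece subtracted off in \eqref{def:adiabatic response}. It is not what produces the $(H_0-E_0)^{-2}$ you need; that second power comes from the fact that the first super-adiabatic correction is $\caI(K)=\caI^2(W)$, as you yourself note in step~(A). So that sentence should be dropped or replaced by a direct reference to $\caI^2$ acting as $-(\mathrm{ad}_H)^{-2}$ on the off-diagonal subspace.
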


%
%
%

\subsection{Adiabatic curvature}\label{subsec:adiabatic curvature}
We recall the vector potential $A=\phi_1 \xi_1+\phi_2 \xi_2$
introduced in Section \ref{subsec:ham with flux}, corresponding to threaded fluxes $\phi=(\phi_1,\phi_2) $. We now consider the so-called \emph{twist} Hamiltonians
$$
H(\phi)=H(\phi_1,\phi_2)= H_{\phi_1 \xi_1+\phi_2 \xi_2},\qquad \phi \in \bbT^2.
$$
For small $\phi$, the twist Hamiltonian $H(\phi)$ is a small (in norm) perturbation of $H$, so from assumption \ref{tilde gap} it follows that we can find a neighbourhood $\caU = \caU(L)$ of $\phi=0$ such that the twist Hamiltonian $H(\phi)$ also has a non-degenerate ground state, gapped by ${g}/2$. In this neighbourhood $\caU$, we denote by ${P}(\phi)$ the ground state projection of ${H}(\phi)$. We thus have a two-parameter family of projections of which we consider the \emph{adiabatic curvature} at $\phi=0$:
\begin{equation}\label{def:adiabatic curvature}
\kappa  := \iu \;\Tr \left( {P} [\partial_{1} {P}, \partial_{2} {P}] \right) = \iu \; \omega \left( [\partial_{1} {P}, \partial_{2} {P}] \right),    \qquad  \partial_{i}P= \partial_{\phi_i}P(\phi)\big|_{\phi=0}
\end{equation}
We immediately point out that $\kappa$ is independent of the precise form of the vector potential $A$ that was used to define the twist Hamiltonian. Indeed, consider another vortex-free $A'$ that threads the same flux $(\phi_1,\phi_2)$, implying  that it is of the form
$$
A'= \phi_1 (\xi_1 + \dd \theta_1) + \phi_2(\xi_2 + \dd \theta_2) = A+  \dd f,\qquad \text{where}\quad f=  \phi_1 \theta_1 + \phi_2\theta_2.
$$
Then changing $A\to A'$ does not change the adiabatic curvature $\kappa$. 
More precisely,
\begin{lemma} \label{lem: curvature gauge}
Let $H'(\phi)$ be
$$
H'(\phi) = H_{A'}, \qquad  A'= \phi_1 (\xi_1 + \dd \theta_1) + \phi_2(\xi_2 + \dd \theta_2)
$$
for some functions $\theta_{1,2}$ satisfying $\norm{\dd\theta_{1, 2}} \leq C$. By \eqref{eq:H under gauge} these Hamiltonians are also uniformly gapped for $\phi$ in a neighbourhood $\caU'$ of $0$. If we write  $P'(\phi)$ for the corresponding groundstate projections, then
$$
\kappa = \iu \: \omega' \left(  [\partial_1  P', \partial_2 P' ] \right) + \caO(L^{-\infty}).
$$
\end{lemma}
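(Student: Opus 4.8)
The plan is to realize $H'(\phi)$ as a $\phi$-dependent gauge transform of $H(\phi)$ and to track how this affects the Berry curvature. Writing $f_\phi = \phi_1\theta_1+\phi_2\theta_2$ and $\Theta_i=\langle\theta_i,n\rangle$, equation \eqref{eq:H under gauge} gives $H'(\phi)=U_{f_\phi}H(\phi)U_{f_\phi}^{*}$ with $U_{f_\phi}=\ep{\iu\phi_1\Theta_1}\ep{\iu\phi_2\Theta_2}$ (the two exponents commute because the $\Theta_i$ are functions of the $n_x$), hence $P'(\phi)=U_{f_\phi}P(\phi)U_{f_\phi}^{*}$. At $\phi=0$ one has $U_{f_0}=\mathbb 1$, so $P'(0)=P$, $\omega'=\omega$, and $\partial_iP'|_{0}=\partial_iP+\iu[\Theta_i,P]$. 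Substituting into \eqref{def:adiabatic curvature}, expanding the commutator, and using $P^2=P$ (so that $P(\partial_iP)P=0$ and the $P$-diagonal pieces drop) one is left with $\iu\,\omega'([\partial_1P',\partial_2P']) = \kappa + (\partial_{\phi_1}\mathcal B_2-\partial_{\phi_2}\mathcal B_1)|_{\phi=0}$, where $\mathcal B_i(\phi):=\iu\,\omega_\phi(U_{f_\phi}^{*}\partial_iU_{f_\phi})=-\omega_\phi(\Theta_i)$ is the \emph{exact pure-gauge} Berry connection (it is $\phi$-independent as an operator, $U_{f_\phi}^{*}\partial_iU_{f_\phi}=\iu\Theta_i$). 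So the whole statement reduces to showing that this pure-gauge curvature is $\caO(L^{-\infty})$, i.e. that the antisymmetrized flux response $\Delta:=\big(\partial_{\phi_1}\omega_\phi(\Theta_2)-\partial_{\phi_2}\omega_\phi(\Theta_1)\big)\big|_{0}=\Tr((\partial_1P)\Theta_2)-\Tr((\partial_2P)\Theta_1)$ vanishes super-polynomially. (That $\Delta$ at least \emph{integrates} to zero over $\bbT^2$ is immediate from Stokes; the point is the pointwise bound, which must use the locality of the gapped ground state.)

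The second step is to localize the flux derivatives. Choosing bounded functions $\sigma_i:\Gamma\to\bbR$ that equal $x_i/L$ away from a single non-contractible "branch-cut" loop $\ell_i$ (with $\ell_1$ winding transversally to $\ell_2$, so that $\ell_1\cap\ell_2$ is one point), one gets $\partial_{\phi_i}H(\phi)|_{0}=-\iu[\langle\sigma_i,n\rangle,H]+J_i$, where, up to the harmless $\caO(1)$ end ambiguity of Section~\ref{subsec:current operators}, $J_i$ is a current operator supported within $\caO(1)$ of $\ell_i$; cf.\ \eqref{eq: ha locally gauge}–\eqref{eq: current as region current}. First-order perturbation theory then gives $\partial_iP|_{0}=-\iu[\langle\sigma_i,n\rangle,P]+\Pi_i$ with $\Pi_i$ constructed from $J_i$ via the reduced resolvent, and the gap plus Lieb–Robinson bounds make $\Pi_i$ quasi-local around $\ell_i$ (that is $\Pi_i=(\Pi_i)_{\ell_i^{r}}+\caO(r^{-\infty})$). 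Since $\langle\sigma_i,n\rangle$ and $\Theta_j$ are both diagonal and hence commute, the terms $\Tr([\langle\sigma_i,n\rangle,P]\Theta_j)=\omega([\Theta_j,\langle\sigma_i,n\rangle])=0$ disappear, leaving $\Delta=\Tr(\Pi_1\Theta_2)-\Tr(\Pi_2\Theta_1)$, which by symmetry of static linear response equals $\chi_{J_1,\Theta_2}(0)-\chi_{J_2,\Theta_1}(0)$ (with $\chi$ as in Lemma~\ref{lem: locality of lr}).

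For the final estimate I would combine Lemma~\ref{lem: locality of lr} with the hypothesis $\norm{\dd\theta_{1,2}}\le C$. Each term splits as $\chi_{J_i,\Theta_j}(0)=\sum_y\theta_j(y)\,\Tr(\Pi_in_y)$, and $\Tr(\Pi_in_y)=\chi_{n_y,J_i}(0)$ is, up to $\caO(L^{-\infty})$, concentrated in the tube $\ell_i^{\,\caO(1)}$ and satisfies the sum rule $\sum_y\Tr(\Pi_in_y)=\partial_{\phi_i}\omega_\phi(N)|_{0}=0$ (conservation of $N$); this "dipole along $\ell_i$" structure, together with $\norm{\dd\theta_j}\le C$, lets one replace $\theta_j(y)$ by its restriction to $\ell_i$ up to $\caO(\operatorname{dist}(y,\ell_i))$, after which $\chi_{J_i,\Theta_j}(0)$ is carried (modulo $\caO(L^{-\infty})$) by an $\caO(1)$-neighbourhood of the crossing $\ell_1\cap\ell_2$, where the two contributions to $\Delta$ cancel. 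The hard part is exactly this last step: showing that the non-cancelling remainder of the antisymmetrized response is both localized near the single crossing point and super-polynomially small there — the naive norm bounds are far too crude (each $\Tr(\Pi_i\Theta_j)$ is only $\caO(1)$, not small), so one genuinely needs the thermodynamic-limit locality of the gapped ground state, i.e. exponential clustering of truncated correlations, to close the argument.
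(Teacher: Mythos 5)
Your first step is correct and coincides with the paper's: writing $P'(\phi)=U(\phi)P(\phi)U(\phi)^*$ with $U(\phi)=\ep{\iu\langle \phi_1\theta_1+\phi_2\theta_2,n\rangle}$, expanding $\partial_iP'|_0=\partial_iP+\iu[\langle\theta_i,n\rangle,P]$, and killing the quadratic term via $[\langle\theta_1,n\rangle,\langle\theta_2,n\rangle]=0$ and the algebra of \eqref{eq: algebra parallel}. This correctly reduces the lemma to showing that the cross terms, i.e.\ the quantities $\Tr\big((\partial_jP)\,\langle\theta_i,n\rangle\big)$, are negligible. But the second half of your argument has a genuine gap, which you yourself flag: your route only aims at the \emph{antisymmetrized} combination $\Delta$, represents each $\partial_jP$ via a global potential with a branch cut plus an $\caO(1)$ current-type remainder $\Pi_j$ localized on a non-contractible loop, and then hopes for a cancellation between the two terms near the single crossing point of the two loops. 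The objects $\Pi_j$ and $\Tr(\Pi_j\langle\theta_i,n\rangle)$ are individually $\caO(1)$, the sum rule $\sum_y\Tr(\Pi_jn_y)=0$ does not by itself force $\sum_y\theta_i(y)\Tr(\Pi_jn_y)$ to be small (one would need the local density response $\Tr(\Pi_jn_y)$ itself to be small at each $y$, not just its total), and no mechanism is given for why the leftover contribution at the crossing should be $\caO(L^{-\infty})$ rather than $\caO(1)$. As written, the proof does not close.

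The paper's argument is both simpler and stronger: \emph{each} cross term vanishes separately, with no antisymmetrization, no branch cut and no cancellation between the two terms. One writes $\partial_jP=\iu[K_j,P]$ with $K_j=\caI(\partial_{\phi_j}H_A)$ (Lemma \ref{lem: adiabatic generators}), so that $\Tr\big((\partial_jP)\langle\theta_i,n\rangle\big)=\iu\sum_x\theta_i(x)\,\omega([n_x,K_j])$. For each fixed $x$ one chooses a region $\Sigma^x$ of diameter $\caO(L)$ containing $x$ on which $\xi_j+\dd\theta_j$ is exact, $=\dd f_j$; there is no need for a globally defined potential, so no branch cut ever enters. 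Then $(\partial_{\phi_j}H_A)_{\Sigma^x}=\iu([\langle f_j,n\rangle,H])_{\Sigma^x}$, and by the Lieb--Robinson locality of $\caI$ (Lemma \ref{lem:I is inverse}(v)) one may replace $K_j$ by $\caI(\iu[\langle f_j,n\rangle,H])$ inside $\omega([n_x,\cdot])$ up to $\caO(L^{-\infty})$. Finally Lemma \ref{lem:I is inverse}(i),(ii),(iv) turn $\caI([\,\cdot\,,H])$ back into the diagonal operator $\langle f_j,n\rangle$ modulo pieces invisible to $\omega([n_x,\cdot])$, and $[n_x,\langle f_j,n\rangle]=0$. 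In other words, the key fact is that flux threading does not change any local density to any polynomial order, $\omega([n_x,K_j])=\caO(L^{-\infty})$ uniformly in $x$; summing over $x$ with $|\theta_i(x)|=\caO(L)$ still gives $\caO(L^{-\infty})$. This is exactly the estimate your outline is missing, and it renders the crossing-point cancellation unnecessary.
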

%
%
It is useful to state an alternative, oft-used form of the curvature.  Its basic ingredients  are generators of parallel transport $K_i$. These operators have to satisfy the relation 
\begin{equation}\label{eq: property adiabatic generators}
\partial_{j} P=\iu \: [K_j,P],\qquad j=1,2.
\end{equation}
It is immediate that this relation does not fix $K_i$ uniquely. A choice that one encounters often (but that is rather useless in the many-body setting because it is not local) is $K_i= P (\partial_{i}P) (1-P) + (1-P) (\partial_{i}P) P $. By a little algebra, we see that
\begin{equation} \label{eq: algebra parallel}
  \omega([[K_1,P],[K_2,P]]) =  -\omega([K_1,K_2]) 
\end{equation}
and hence, for any pair  $K_{1,2}$ of generators of parallel transport, 
$$
\kappa= \iu \: \omega([K_1,K_2]).
$$
%
%
%
\section{Results}\label{sec:Results}

To put the results that follow into a firm context, we note that a strong from of quantization was proven in \cite{HastingsMichalakis,bachmann2018many,Giuliani:2016gn} for the adiabatic curvature $\kappa$ as introduced in Section \ref{subsec:adiabatic curvature}, namely,
\begin{theorem} \label{thm: quantization} 
There exists $ n\in \bbZ$ such that
$$
\left| \kappa -2\pi n \right|  =\caO(L^{-\infty}).
$$
\end{theorem}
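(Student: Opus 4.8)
The plan is to follow the many-body quantization argument of Hastings and Michalakis \cite{HastingsMichalakis}, recast with the quasi-adiabatic tools of \cite{bachmann2018quantization,bachmann2018many}; a quite different, renormalization-group, route is taken in \cite{Giuliani:2016gn}. In broad strokes: one first uses locality to confine the adiabatic curvature to a small region near a single point of $\Gamma$, and then recognises what remains as a manifestly integer-valued topological index.

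\emph{Step 1: put the curvature on two transverse loops.} I would use Lemma~\ref{lem: curvature gauge} to choose the vector potential defining $H(\phi)$, within its flux class, so that $W_j:=\partial_{\phi_j}H(\phi)|_{\phi=0}$ is a current operator (cf.\ \eqref{eq: current as region current}) supported in a tube of width $\caO(R)$ around a non-contractible loop $\ell_j\subset\Gamma$, with $\ell_1,\ell_2$ winding the two cycles of the torus and meeting in a single site $p$ (their intersection number is $1$). I would then set $K_j:=\caI(W_j)$, where $\caI$ is the quasi-adiabatic continuation map of Section~\ref{sec: preliminaries}: these satisfy \eqref{eq: property adiabatic generators}, so \eqref{eq: algebra parallel} gives $\kappa=\iu\,\omega([K_1,K_2])$, and each $K_j$ is a sum of local terms whose norms decay faster than any power of $\mathrm{dist}(\cdot,\ell_j)$. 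A Lieb--Robinson/locality estimate then shows that, up to $\caO(L^{-\infty})$, the commutator $[K_1,K_2]$ is supported in a bounded neighbourhood of $p$, so that $\kappa$ does not see the bulk of $\Gamma$.

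\emph{Step 2: identify $\kappa$ with an integer index.} The integrality is topological, with conceptual source the fact that a full flux quantum is a large gauge transformation: by \eqref{eq:H under gauge}, $H(\phi_1+2\pi,\phi_2)=U_1H(\phi_1,\phi_2)U_1^*$ with $U_1=\ep{\iu\langle\theta_1,n\rangle}$, $\theta_1(x)=2\pi x_1/L$, and likewise for the second cycle. Hence, were the gap to persist for all $\phi$, the ground-state family would form a line bundle over $\bbT^2$ whose Chern number $\tfrac{1}{2\pi}\int_{\bbT^2}\kappa(\phi)\,\dd\phi$ is an integer, and since $\kappa(\phi)=\kappa(0)+\caO(L^{-\infty})$ by the locality of Step~1 this would already yield $\kappa=2\pi n+\caO(L^{-\infty})$. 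Because Assumption~\ref{tilde gap} only controls the gap on the neighbourhood $\caU$ of $\phi=0$, I would instead realise $n$ directly as a $\bbZ$-valued Fredholm-type index built from the quasi-local data of Step~1 --- concretely the many-body charge-transport index of \cite{bachmann2018many}, which is the net charge pumped through a cross-section of $\Gamma$ upon threading one flux quantum (Laughlin's argument) --- the trace-class hypotheses that make that index meaningful being exactly the decay estimates of Step~1, and then verify $n=\kappa/2\pi+\caO(L^{-\infty})$.

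\emph{The hard part} is precisely this last verification, since it must be carried out using only the gap at $\phi=0$. Following \cite{HastingsMichalakis}, the way through is to exploit the concentration from Step~1: because all the relevant structure lives within $\caO(1)$ of the single site $p$, one can deform the flux insertion --- equivalently, decouple an $\caO(1)$ patch around $p$ from the rest of $\Gamma$ --- so that the bulk of the torus never leaves the gapped region, at the cost of only $\caO(L^{-\infty})$. Carrying this out rigorously, i.e.\ keeping all $\caO(L^{-\infty})$ errors under control and checking that the resulting integer is independent both of the many choices made and of $L$, is the technical core of the theorem, and is where the estimates of \cite{bachmann2018quantization,bachmann2018many} are genuinely needed.
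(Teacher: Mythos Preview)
Your outline is a faithful sketch of the Hastings--Michalakis argument as streamlined in \cite{bachmann2018quantization,bachmann2018many}, and the two-step structure (localise $\kappa$ near a point via $\caI$ and Lieb--Robinson, then identify it with an integer index using the large-gauge periodicity) is correct in spirit.

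However, the paper does not prove this theorem at all. Theorem~\ref{thm: quantization} is quoted as a known result, with the proof delegated entirely to the references \cite{HastingsMichalakis,bachmann2018many,Giuliani:2016gn}; the role it plays here is purely contextual, to motivate why the various response coefficients discussed in the rest of Section~\ref{sec:Results} are worth linking to $\kappa$. So there is nothing to compare your proposal against within this paper --- you have in effect outlined the contents of the cited works rather than reproduced anything the present authors do. If your intent was to supply the missing argument, your sketch is on the right track but would need the full machinery of those references (in particular the careful treatment of the ``hard part'' you flag, where the gap is only assumed at $\phi=0$) to become a proof.
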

Note that the proof of this theorem is simplified if one demands that Assumption \ref{tilde gap}  holds for all fluxes $\phi \in \bbT^2$, see \cite{bachmann2018quantization}. In view of this result, we build up the following sections as linking alternatively defined response coefficients to $\kappa$.

\subsection{From the Kubo response to adiabatic curvature}

We want to compute the current density in response to a perpendicular applied electric field. We measure this current density $j_2$ in the $2$-direction and at the origin. The driving is by a uniform electric field of strength $E$ in the $1$-direction. 
The Hall conductivity in this setup should be 
\begin{equation}\label{eq: hall conductivity as ratio}
 \frac{j_2}{E}= \frac{\langle J_{\gamma_d}\rangle}{2dE},\qquad  E\to 0
\end{equation}
where  $\gamma_d$ is the oriented path in $\Gamma^*$  running in the $x_1$-direction from $-d+1/2$ to $d+1/2$ at $x_2=1/2$
 i.e.\ it has length $2d$, see Figure \ref{fig:local_current}. The corresponding current operator $J_{\gamma_d}$ was defined in Section \ref{subsec:current operators}.   To implement the electric field, we choose an electrostatic potential $v$ that gives a constant electric field in the strip $\{|x_1| \leq \ell\}$:
\begin{equation}\label{eq:current within field}
{\dd} v = E   \dd x_1 \quad  \text{ on} \quad  \{|x_1| \leq \ell\}^e
\end{equation}
 \begin{wrapfigure}{r}{0.3\textwidth}
  \begin{center}
    \includegraphics[width=0.3\textwidth]{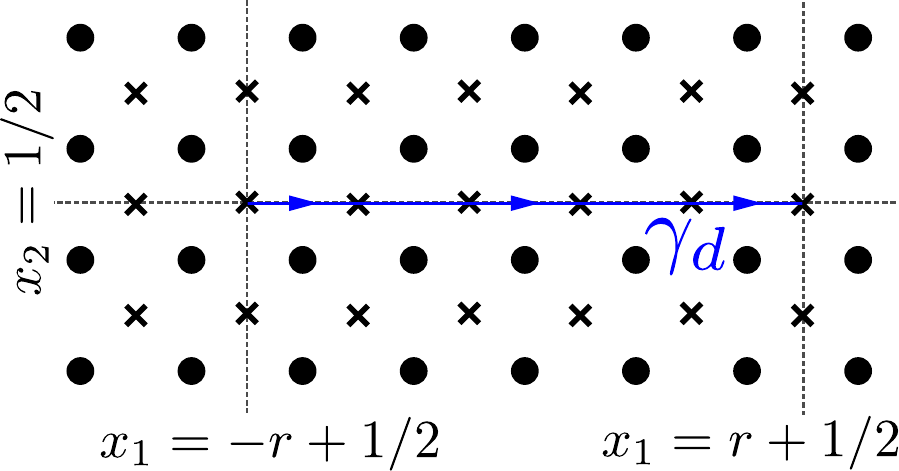}
  \end{center}
  \caption{}
  \label{fig:local_current}
\end{wrapfigure}
with $\ell \geq d$. (one could think that $\ell\gg d$ is necessary but that does not make any difference for the upcoming result) For the rest, $v$ is arbitrary but such that $\norm{\dd v} \leq C$.   The operator implementing this potential is $V=\langle v, n\rangle$ and so we have specified both $J=J_{\gamma_d}$ and $V$, see Figure \ref{fig:lemma42torus}.
\begin{lemma}\label{thm:1}
With $V,J$ chosen as in the lines above, we have
\begin{equation}
\left| \kappa - \frac{\chi_{J,V}}{2Ed} \right| = \caO(1/d).
\end{equation}
\end{lemma}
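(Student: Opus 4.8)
The plan is to express both $\kappa$ and $\chi_{J,V}$ in terms of a common generator of parallel transport and then estimate the difference. Recall from \eqref{eq: property adiabatic generators} and the discussion around \eqref{eq: algebra parallel} that $\kappa = \iu\,\omega([K_1,K_2])$ for \emph{any} pair of parallel-transport generators. The point of departure is the observation that flux threading in the $i$-direction is implemented infinitesimally by the current operator: by \eqref{eq: current as region current}, if $X$ is the half-infinite (on the torus: thick) strip whose boundary contains a long segment $\gamma$ running across the torus in the $i$-direction, then $J_\gamma$ is the spatial restriction of $\iu[n_X,H]$, and $\iu[n_X,H]$ is exactly $\partial_{\phi_i}H(\phi)|_{\phi=0}$ up to gauge (using the local form $\xi_i \simeq \dd x_i/L$). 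So morally $\partial_i P = \iu[\widetilde K_i, P]$ with $\widetilde K_i = -n_{X_i}$, where $X_1$ is the strip $\{ |x_1|\le \ell\}$ and $X_2$ a strip in the complementary direction; this is the content one extracts from Lemma \ref{lem: curvature gauge} together with the definition of the twist Hamiltonian. The technical issue is that $n_{X_i}$ is not bounded and not exactly a parallel-transport generator; one repairs this using the quasi-adiabatic / spectral-flow machinery $\caI$ from Section \ref{sec: preliminaries} (invoked already in Lemma \ref{lem: locality of lr}), replacing $n_{X_i}$ by its `gapped' version $K_i := -\caI$-dressed version of $n_{X_i}$, which \emph{is} a parallel-transport generator, is quasi-local near $\partial X_i$, and differs from what we want only by $\caO(L^{-\infty})$ tails.

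With that in hand the computation is: $\kappa = \iu\,\omega([K_1,K_2])$. Now $K_1$ is quasi-localized near the two boundary lines of the strip $\{|x_1|\le \ell\}$ (i.e.\ near $x_1 = \pm\ell$), and by the locality of $\caI$ one can further write, up to $\caO(L^{-\infty})$, $K_1 = \caI(V)/E + (\text{piece near } x_1=-\ell)$, because $V = \langle v,n\rangle$ with $\dd v = E\,\dd x_1$ on $\{|x_1|\le\ell\}$ means $\iu[V,H] = E\cdot(\text{current through the strip})$, i.e.\ $V$ plays the role of $-E\, n_{X_1}$ inside the strip. Meanwhile $K_2$ is localized near a transverse line; truncating it to the finite segment $\gamma_d$ of length $2d$ (at $x_2=1/2$, running from $-d+1/2$ to $d+1/2$) costs $\caO(1/d)$ because the parallel transport generator in the $2$-direction, restricted to the region near $x_2=1/2$, agrees with $J_{\gamma_d}$ up to endpoint ambiguities of bounded norm at the two ends of $\gamma_d$ — and these endpoints are a fraction $\caO(1/d)$ of the whole, OR, more carefully, the commutator $[K_1, K_2]$ only picks up contributions where the supports overlap, and $K_1$'s support (near $x_1=\pm\ell$) meets $\gamma_d$'s support only near $x_1 = \pm d$ if $\ell$ is comparable to $d$ — here one uses the stated freedom that $\ell\ge d$ and that the answer does not depend on $\ell\gg d$ versus $\ell\approx d$. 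The cleanest route: choose $\ell$ so that the strip boundary $x_1=\ell$ passes through the right endpoint region of $\gamma_d$ and $x_1=-\ell$ through the left endpoint region, so that $[K_1,K_2]$ localizes near the two endpoints of $\gamma_d$, each contributing the `half-current' that builds up $J_{\gamma_d}$, giving $\iu\,\omega([K_1,K_2]) = \frac{1}{2Ed}\,\iu\,\omega([\caI(J_{\gamma_d}),V]) + \caO(1/d) = \frac{\chi_{J,V}}{2Ed} + \caO(1/d)$, using the identity $\chi_{J,V} = \iu\,\omega([\caI(J),V])$ from Lemma \ref{lem: locality of lr}.

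Concretely the steps are: (1) write $\kappa = \iu\,\omega([K_1,K_2])$ with $K_1,K_2$ the quasi-adiabatic generators of flux threading in the two directions, $K_i = -\caI$(dressed $n_{X_i}$), using Lemma \ref{lem: curvature gauge} to fix the gauge so that $X_1 = \{|x_1|\le\ell\}$ and $X_2$ is a strip across the $x_1=0$ line; (2) use locality of $\caI$ and $\{[B_X,n_x]=0\}$ to identify $E K_1$ with $\caI(V)$ modulo a term supported far from $\gamma_d$; (3) use the $L^{-\infty}$-locality to drop the portion of $K_2$ away from the window $|x_1|\le d$, reducing $K_2$ to (the dressing of) a current through the finite path $\gamma_d$, i.e.\ relate it to $\caI(J_{\gamma_d})$ up to $\caO(1/d)$ endpoint corrections; (4) assemble $\iu\,\omega([K_1,K_2]) = \frac{1}{2Ed}\iu\,\omega([\caI(J),V]) + \caO(1/d)$ and invoke $\chi_{J,V} = \iu\,\omega([\caI(J),V])$. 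The main obstacle is Step 3: making precise the claim that truncating the transverse parallel-transport generator to the length-$2d$ path $\gamma_d$, and matching it against the current operator $J_{\gamma_d}$ with its inherent $\caO(R)$ endpoint ambiguity, produces exactly an $\caO(1/d)$ relative error — this requires carefully tracking the endpoint operators of $J_\gamma$ (as flagged in Section \ref{subsec:current operators}) through the map $\caI$ and showing their contribution to $\omega([\cdot,V])$ is $\caO(1)$ absolute, hence $\caO(1/d)$ after dividing by $2Ed$, while the bulk of $\gamma_d$ reproduces $\chi_{J,V}$ on the nose up to $L^{-\infty}$. The normalization factor $1/(2Ed)$ then falls out of $\dd v = E\,\dd x_1$ (the $E$) and $|\gamma_d| = 2d$ (the $2d$), matching \eqref{eq: hall conductivity as ratio}.
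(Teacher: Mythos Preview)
Your strategy is essentially the content of the paper's proof of Theorem~\ref{thm:accurate}: rewrite $\chi_{J,V}$ as $-\omega([\caI([H,V]),\caI(J)])$, identify the two entries locally with $\caI(\partial_{\phi_1}H_A)$ and $\caI(\partial_{\phi_2}H_A)$ for a suitable vector potential $A$, and invoke $\kappa=\iu\,\omega([K_1,K_2])$. The paper, however, does \emph{not} prove Lemma~\ref{thm:1} this way. It first proves Theorem~\ref{thm:accurate} (endpoints in a field-free buffer of width $r$, error $\caO(r^{-\infty})$) and then obtains Lemma~\ref{thm:1} as a two-line corollary: reduce to $d=\ell$ by flattening $v$ outside $\{|x_1|\le d\}$ (which, by the locality Lemma~\ref{lem: locality of lr}, changes $\chi_{J,V}$ by $\caO(1)$), apply Theorem~\ref{thm:accurate} with the longer path of length $2(\ell+r)$, and then shorten the path back to $2\ell$, using $\|J_{\gamma'}-J_\gamma\|=\caO(r)$ together with the representation $\chi=\iu\,\omega([\caI(J),V])$ and Lemma~\ref{lem:I is inverse}(vi). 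Division by $2Ed$ gives the $\caO(1/d)$. Your approach inlines Theorem~\ref{thm:accurate} into the proof; the paper's modular route is cleaner and isolates the sharp statement.

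There is also a genuine slip in your Step~2. You write ``identify $EK_1$ with $\caI(V)$'', and earlier ``$\widetilde K_i=-n_{X_i}$''. Neither is correct. The operators $n_{X_i}$ generate a \emph{pure gauge} family (curvature zero); they are not parallel-transport generators for the flux-threading family $H(\phi)$. The actual generator is $K_1=\caI(\partial_{\phi_1}H_A)$, and locally $\partial_{\phi_1}H_A=\iu[V,H]$ (not $V$ itself), so the object to match is $\caI([H,V])$, not $\caI(V)$. The passage from $\omega([\caI([H,V]),\caI(J)])$ back to $\omega([\caI(J),V])=\chi_{J,V}/\iu$ requires Lemma~\ref{lem:I is inverse}(i),(ii),(iv): one uses $\overline{\caI([H,V])}=\iu\,\bar V$ to strip one layer of $\caI$. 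Your outline never invokes this inverse property, which is why Step~2 reads as a mis-identification. Once you insert that step, your direct argument does go through, with the $\caO(1/d)$ indeed coming from the $\caO(1)$ endpoint ambiguity of $J_{\gamma_d}$ divided by $2Ed$.
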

The relatively large error $\caO(1/d)$ in this theorem is explained by realizing that the current operator $J_{\gamma_d}$ itself is only defined unambiguously up to terms of norm unity at the edges of the line segment, see Section \ref{subsec:current operators}.  This also shows the way to a solution: We note that  \eqref{eq: hall conductivity as ratio} also equals $\frac{\langle J_{\gamma_d}\rangle}{\Delta v}$ with $\Delta v$ the change in potential along the line segment (in other words: for transverse conductivity in 2D, \emph{conductivity} equals \emph{conductance}). 
One is tempted to modify the setup so that the endpoints of $\gamma_d$ are in a field-free region.  Here is a possible way: We keep the electric field the same as before in the strip $\{|x_1| \leq \ell\}^e$ and we insist that it is identically zero in the strips $\{ \ell < |x_1| <\ell+2r \}^e$ with $r\gg 1$. The length $2d$ of path $\gamma_d$ is now chosen $d=\ell+r$, see Figure \ref{fig:theorem43torus}.
So, to nail down the model precisely, we take $J=J_{\gamma_{d}}$ (defined above Lemma \ref{thm:1}) with $d=\ell+r$  and $V=\langle v, n \rangle$ with
\begin{equation}\label{eq:potential outside current}
{\dd} v =  \begin{cases} E  \dd x_1  & \text{on} \quad \{ |x_1|  \leq \ell\}^e  \\
0 & \text{on}  \quad \{  \ell <|x_1| \leq \ell+2r\}^e
 \end{cases}
\end{equation}
and $v$ arbitrary elsewhere but with $\norm{\dd v} \leq C$. We write $\Delta v=v(\ell, 0)-v(-\ell, 0) $
\begin{theorem}\label{thm:accurate}
With $V,J$ chosen as in the lines above
\begin{equation}
\left| \kappa - \frac{\chi_{J,V}}{\Delta v} \right| = \caO \big(r^{-\infty} \big).
\end{equation}
\end{theorem}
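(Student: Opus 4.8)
The plan is to upgrade Lemma \ref{thm:1} by exploiting the fact that the endpoints of $\gamma_d$ now sit in a field-free strip, so that the $\mathcal O(1/d)$ ambiguity in the current operator collapses to $\mathcal O(r^{-\infty})$. The central identity to use is the ``current as flux-threading'' relation \eqref{eq: current as region current}, namely $J_\gamma = \iu ([n_X, H])_Z$, combined with the gauge-transformation relation \eqref{eq: ha locally gauge}. The key observation is that $\Delta v$ is precisely the amount of flux that, after a gauge transformation, one can push across the torus through the line $x_1 = \pm\ell$; because $\dd v = E\,\dd x_1$ on the strip $\{|x_1|\le\ell\}^e$ and vanishes on the buffer strips, the potential $v$ (restricted appropriately) looks, near $\gamma_d$, like a piece of the twist vector potential $\phi_1\xi_1$ with the identification $\phi_1 \leftrightarrow \Delta v$ up to the trivial rescaling by $L$, plus an exact part $\dd\theta$ supported away from the field region.

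First I would set up the dictionary. Choose a region $X$ whose boundary $\partial X$ contains $\gamma_d$ as a subpath and such that the complementary part $\partial\partial X\setminus\partial\gamma_d$ lies inside the field-free strips $\{\ell<|x_1|\le\ell+2r\}$; this is possible precisely because $d=\ell+r$. Then $J = J_{\gamma_d} = (J_{\partial X})_Z = \iu([n_X,H])_Z$ with $Z$ chosen as in Section \ref{subsec:current operators}. Next I would invoke the locality of Kubo response, Lemma \ref{lem: locality of lr}: since the intersection of the supports of $J$ and $V=\langle v,n\rangle$ is (up to $r$-neighborhoods) contained in the field strip, $\chi_{J,V}$ is insensitive, modulo $\mathcal O(r^{-\infty})$, to the values of $v$ outside $\{|x_1|\le\ell+2r\}$. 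In particular we may replace $v$ by a potential $\tilde v$ that agrees with it on a large neighborhood of the relevant region but is globally of the form $\tilde v = \Delta v\, (\ell^{-1}\cdot\text{something})$; more cleanly, one realizes $V$, up to a gauge transformation $U_\theta$ supported in the buffer strips (so that $[\theta,n]$ commutes with everything near $\gamma_d$ and near the field), as the generator $\partial_{\phi_1}$ of flux threading scaled by $\Delta v$.

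The heart of the argument is then to reproduce the computation that underlies the relation between $\chi_{J,W}$ and the adiabatic curvature when both $J$ and $W$ are flux-derivatives of the Hamiltonian. Concretely, with $W = \partial_{\phi_1} H(\phi)|_{\phi=0}$ one has $\chi_{J,W}(0) = \iu\,\omega([\mathcal I(J),W])$ by Lemma \ref{lem: locality of lr}, and using $J = \iu([n_X,H])_Z = \partial_{\phi_2}(\ldots)$-type expressions together with $\mathcal I$ applied to $\iu[P,\cdot]$-commutators, one identifies this with $\iu\,\omega([K_1,K_2])$ for suitable generators of parallel transport $K_1,K_2$, which by \eqref{eq: algebra parallel} equals $\kappa$. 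The scaling by $1/\Delta v$ accounts for the fact that $V$ implements a flux $\Delta v$ rather than a unit flux. All equalities in this chain hold up to $\mathcal O(r^{-\infty})$: the replacement of $v$ by the twist potential (locality of $\chi$, Lemma \ref{lem: locality of lr}), the gauge-invariance of $\kappa$ (Lemma \ref{lem: curvature gauge}), and the unambiguous identification of $J_{\gamma_d}$ with $(J_{\partial X})_Z$ for $\partial X$ closing up in the field-free strip.

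The main obstacle I expect is the careful bookkeeping in the third step: establishing that the Kubo expression $\iu\,\omega([\mathcal I(J),V])$, with $J$ a \emph{local} current operator supported near the segment $\gamma_d$ and $V$ a rescaled local electrostatic potential, actually equals the \emph{global} adiabatic curvature $\kappa$ (a torus quantity) up to $r^{-\infty}$. This requires showing that the quasi-local map $\mathcal I$ applied to $J$ can be replaced, modulo $r^{-\infty}$, by a genuine generator of parallel transport $K_2$ for the $\phi_2$-flux, and that $V$ (suitably gauged) plays the role of $\Delta v\cdot K_1$; one then closes with \eqref{eq: algebra parallel}. Controlling the errors here means carefully tracking how the supports of $\mathcal I(J)$, $V$, and the ground state correlations decay across the buffer strips of width $r$, and checking that no contribution survives from the ``far'' part of $\partial X$ where the current operator was ambiguous — exactly the place where the $\mathcal O(1/d)$ error of Lemma \ref{thm:1} was generated, and which is now killed because that region is field-free.
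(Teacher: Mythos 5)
Your proposal is correct and follows essentially the same route as the paper: express $\chi_{J,V}$ via the quasi-local inverse $\caI$ in the symmetric form $-\omega([\caI([H,V]),\caI(J)])$, localize to an $r$-neighbourhood of the intersection of supports using Lemma \ref{lem: locality with cai}, identify $[H,V]$ and $J$ there with the flux derivatives $\partial_{\phi_1}H_A$ and $\partial_{\phi_2}H_A$ of a twist Hamiltonian via \eqref{eq: ha locally gauge} and \eqref{eq: current as region current}, and conclude with the parallel-transport representation $\kappa=\iu\,\omega([K_1,K_2])$, the factor $\Delta v$ arising because the potential threads flux $\Delta v$ rather than unit flux. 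The only cosmetic difference is your framing as an ``upgrade'' of Lemma \ref{thm:1}, whereas the paper derives that lemma from this theorem; your actual argument does not rely on Lemma \ref{thm:1}, so this is immaterial.
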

  As anticipated, the above result has a much better accuracy than Lemma \ref{thm:1}. 
 What is however not yet explicitly exhibited, is the topological nature of the response coefficient.  We still have a relevant region of constant electric field. However, we note that the error term only depends on $r$ and not  on $\ell, d$ separately: the entire potential difference can also be realized along a single site spacing ($\ell=1$). This already shows that it is not important to have a region around where the electric field is well-defined. 
 We can take this a step further and cast the result in a much more robust way. 
 Let us deform the path $\gamma$, allowing it to be an arbitrary path in $\Gamma^*$ that is part of the oriented boundary of some set (cf. \eqref{eq: current as region current}).
 We denote the begin-and endpoints of $\gamma$ by $y_{\rm b},y_{\rm e}\in \Gamma^*$, thus also specifying an orientation for $\gamma$. 
 We now consider a potential $v$ that is flat on spheres of radius $r$ around\footnote{To make this intuitive condition precise, we refer to the natural embedding of both $\Gamma$ and $\Gamma^*$ in the continuous torus, i.e.\ $[-L/2,L/2]^2$ with edges identified.}  $y_{\rm b}$ and $y_{\rm e}$, see Figure \ref{fig:lemma44torus}.
Abusing the notation slightly, we denote by
$v(y_{\rm b}), v(y_{\rm e})$ the two values that $v$ takes in the vicinity of $v(y_{\rm b}), v(y_{\rm e})$. 
 We then define the potential difference 
 $\Delta v= v(y_{\rm e})-v(y_{\rm b})$.
 We set $J = J_{\gamma}$, then 
 \begin{lemma}\label{thm:accurate even more}
With $V,J$ chosen as in the lines above
\begin{equation}
\left| \kappa - \frac{\chi_{J,V}}{\Delta v} \right| = \caO \big(r^{-\infty} \big).
\end{equation}
\end{lemma}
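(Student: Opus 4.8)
The plan is to follow the argument of Theorem~\ref{thm:accurate}; the genuinely new feature is that $\gamma$ may be an arbitrary boundary arc and $v$ an arbitrary potential flat near its endpoints, and this freedom will be removed using the locality of the Kubo coefficient (Lemma~\ref{lem: locality of lr}), the representation \eqref{eq: current as region current} of $J_\gamma$ as a restricted flux derivative, and the gauge invariance of $\kappa$ (Lemma~\ref{lem: curvature gauge}). As a preliminary, $[N,J_\gamma]=0$ gives $\chi_{J_\gamma,\langle v,n\rangle}=\chi_{J_\gamma,\langle v+c,n\rangle}$ for every constant $c$, so we may normalise $v\equiv 0$ on the ball of radius $r$ around $y_{\rm b}$ and $v\equiv\Delta v$ on the ball of radius $r$ around $y_{\rm e}$ (we take $\Delta v\neq 0$, the statement being vacuous otherwise).

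First, write $\chi_{J_\gamma,V}=\iu\,\omega([\caI(J_\gamma),V])$ as in Lemma~\ref{lem: locality of lr}. Since $J_\gamma$ is supported within a bounded distance of $\gamma$ and $\caI$ is quasi-local, $\caI(J_\gamma)$ is quasi-localised near $\gamma$; hence $\chi_{J_\gamma,V}$ is unchanged, up to $\caO(r^{-\infty})$, if $v$ is modified outside an $\caO(r)$-neighbourhood $\Lambda$ of $\gamma$, and in particular it sees the two balls only through the boundary values $0$ and $\Delta v$. Using \eqref{eq: current as region current} one writes $J_\gamma=(\iu[n_X,H])_Z$ for a region $X$ with $\gamma\subset\partial X$ and a localising $Z$; using \eqref{eq:H under gauge} one has $\ep{\iu sV}H\ep{-\iu sV}=H_{s\,\dd v}$, so the perturbation $V$ realises the (exact) vector potential $\dd v$, which vanishes on the two balls.

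One then uses Lemma~\ref{lem: curvature gauge} to compute $\kappa$ in a gauge $H'(\phi)=H_{\phi_1(\xi_1+\dd\theta_1)+\phi_2(\xi_2+\dd\theta_2)}$ adapted to $\gamma$ and $v$: extend $\gamma$ to a winding loop $\tilde\gamma$ agreeing with $\gamma$ on the arc and lying outside $\Lambda$ elsewhere, concentrate $\xi_1+\dd\theta_1$ on the edges crossed by $\tilde\gamma$, and arrange $\xi_2+\dd\theta_2=\dd v/\Delta v$ on $\Lambda$ while keeping it concentrated on a winding cut disjoint from $\Lambda$ outside; this is a legitimate unit-flux representative precisely because the ball where $v=0$ and the ball where $v=\Delta v$ sit at the two ends of $\gamma$. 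Taking parallel-transport generators $K_1,K_2$ as $\caI$ applied to $\partial_{\phi_i}H'|_{\phi=0}$ (cf.\ \eqref{eq: property adiabatic generators}), the identities \eqref{eq: current as region current}, \eqref{eq:H under gauge} and standard properties of $\caI$ identify $(K_1)_\Lambda$ with $\caI(J_\gamma)$ and $(K_2)_\Lambda$ with $V/\Delta v$, each up to $\caO(r^{-\infty})$ and up to operators commuting with $H$ (which drop out of ground-state commutators); thus the factor $\Delta v$ enters exactly as the normalisation of the $\phi_2$-twist. Substituting into $\kappa=\iu\,\omega([K_1,K_2])$ (see \eqref{eq: algebra parallel}), with the contributions from the field-free balls and from the cuts outside $\Lambda$ being $\caO(r^{-\infty})$, gives $\kappa=\chi_{J_\gamma,V}/\Delta v+\caO(r^{-\infty})$; for the straight path and strip potential this reduces to Theorem~\ref{thm:accurate}, which also pins down the sign.

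The hard part will be the last identification: one must show that neither the profile of $v$ nor the shape of $\gamma$ between the two balls enters the answer, only the total jump $\Delta v$, with an error governed solely by the insulating radius $r$ (uniformly in $L$, in $|\gamma|$ and in $v$). This is the rigorous form of the statement that a gapped ground state supports no longitudinal---dissipative---response: the part of $J_\gamma$ lying in a field-free region must contribute only through its two ends, so that the interior contributions telescope to $\Delta v$. Converting this into quantitative bounds is where the spectral gap (Assumption~\ref{tilde gap}) and the exponential locality of $\caI$ and of the parallel-transport generators are used essentially; the accompanying geometric bookkeeping---the choice of $\tilde\gamma$, the cuts, and the homology-cycle representatives compatible with the positions of $y_{\rm b}$ and $y_{\rm e}$---is routine but must be carried out with care.
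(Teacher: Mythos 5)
Your proposal is correct and follows essentially the same route as the paper: the paper's own proof of this lemma is literally ``the same as Theorem~\ref{thm:accurate}, but with vector potentials $A_1',A_2'$ related to $A_1,A_2$ by gauge transformations,'' which is exactly your construction of flux representatives adapted to $\gamma$ and $v$ (one concentrated on the closed extension $\tilde\gamma$, one equal to $\dd v/\Delta v$ near $\gamma$), combined with Lemma~\ref{lem: curvature gauge}, the identifications \eqref{eq: ha locally gauge}--\eqref{eq: current as region current}, and the quasi-locality of $\caI$ via Lemma~\ref{lem: locality with cai}. The ``hard part'' you flag at the end is in fact already covered by that machinery, and your sketch is, if anything, more detailed than the paper's one-sentence proof.
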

This lemma is our most revealing result on the Kubo response. 

\begin{figure}
\centering
\captionsetup[subfigure]{width=0.8\imagewidth,justification=raggedright}%
  \begin{subfigure}[b]{0.3\textwidth}
    \includegraphics[width=\textwidth]{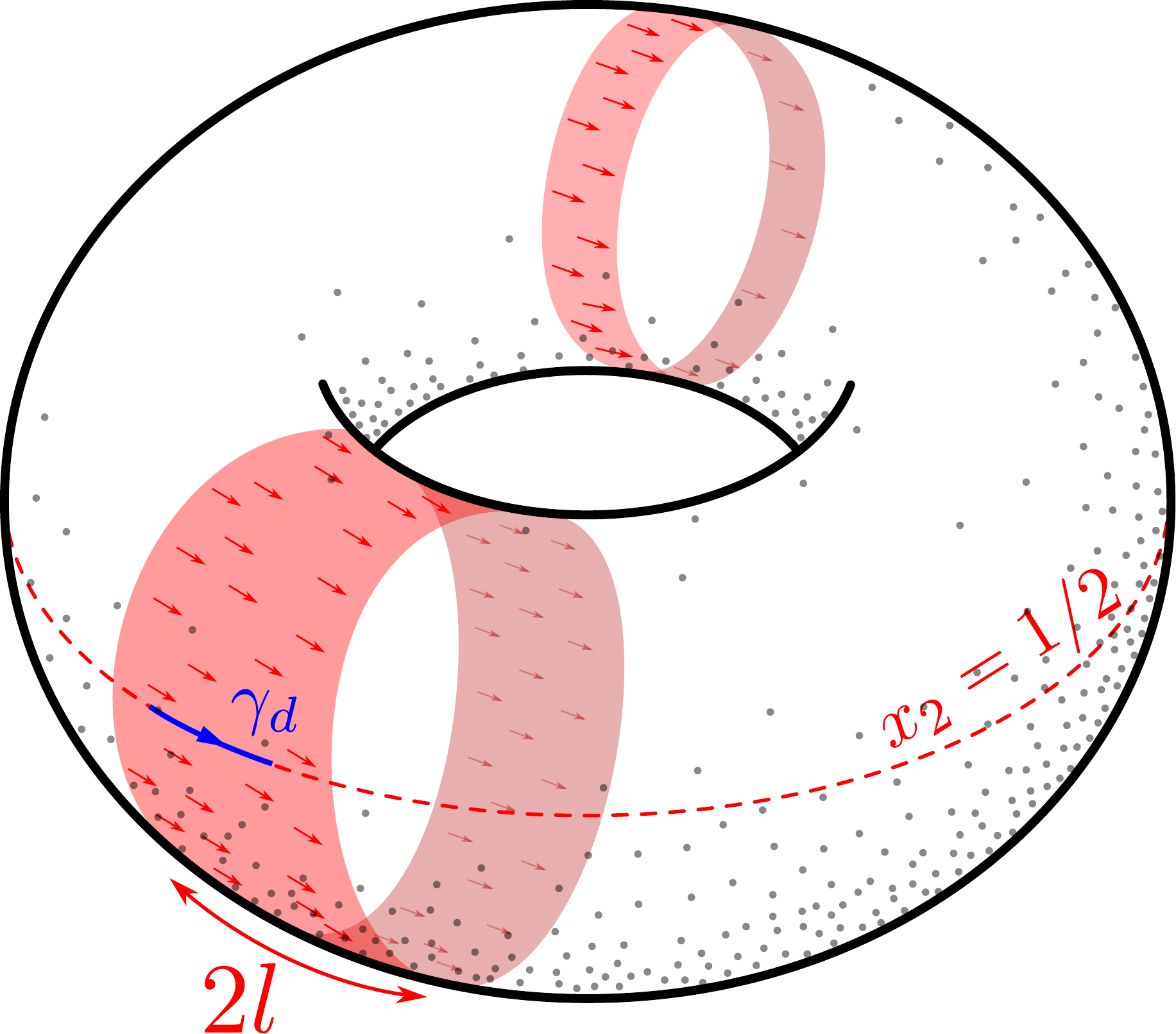}
    \caption{Setup for Lemma \ref{thm:1}. The current is measured across a small segment in the bulk of the electric field. As indicated, since the field $\dd v$ is derived from a potential, it must be non-zero somewhere outside the region $\{ \abs{x_1} \leq l \}$ as well.}
    \label{fig:lemma42torus}
  \end{subfigure}
  \begin{subfigure}[b]{0.3\textwidth}
    \includegraphics[width=\textwidth]{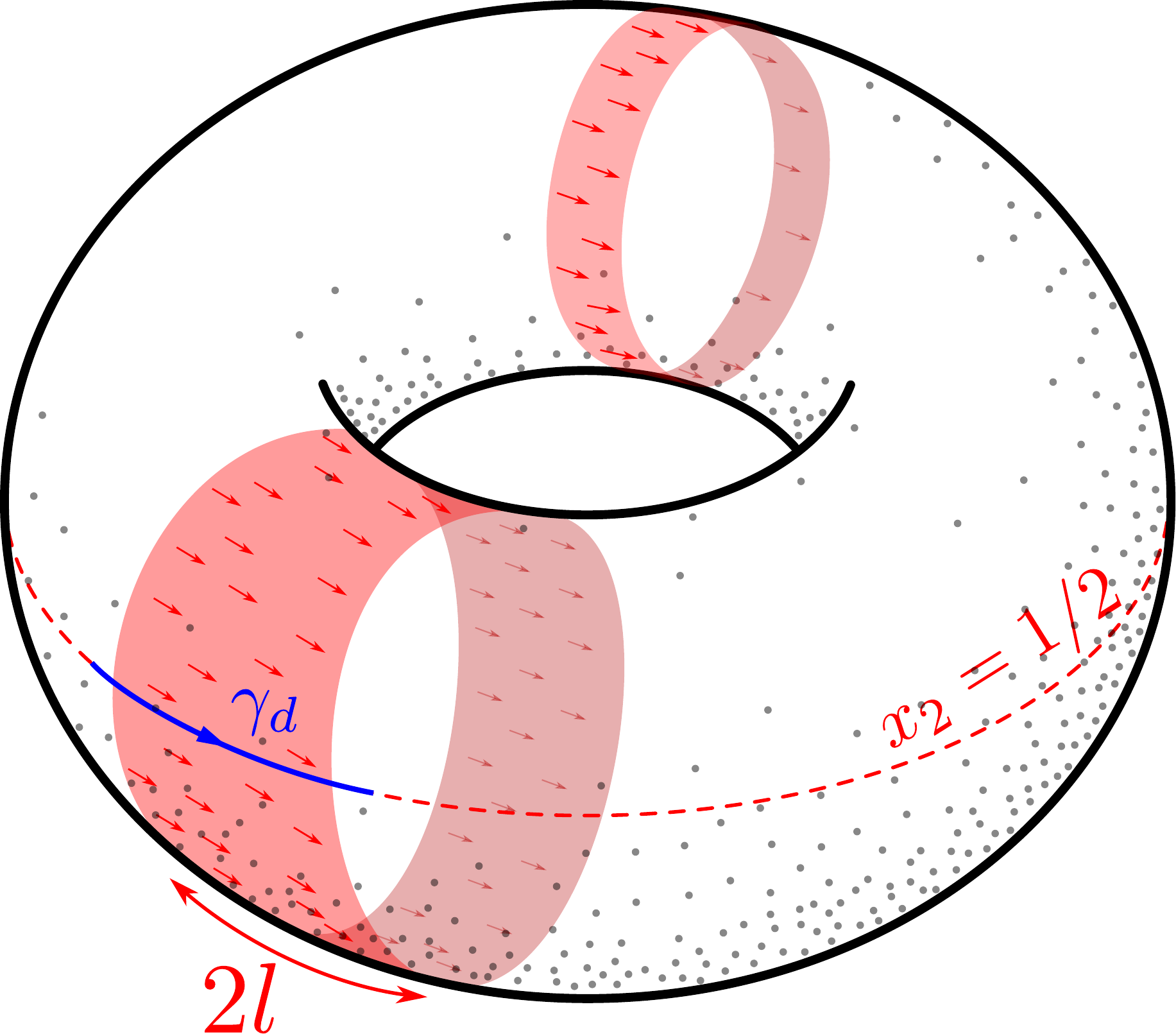}
    \caption{Setup for Theorem \ref{thm:accurate}. The current is measured across a line that completely traverses the region of electric field. \vspace{2cm}}
    \label{fig:theorem43torus}
  \end{subfigure}
  \begin{subfigure}[b]{0.3\textwidth}
    \includegraphics[width=1.1\textwidth]{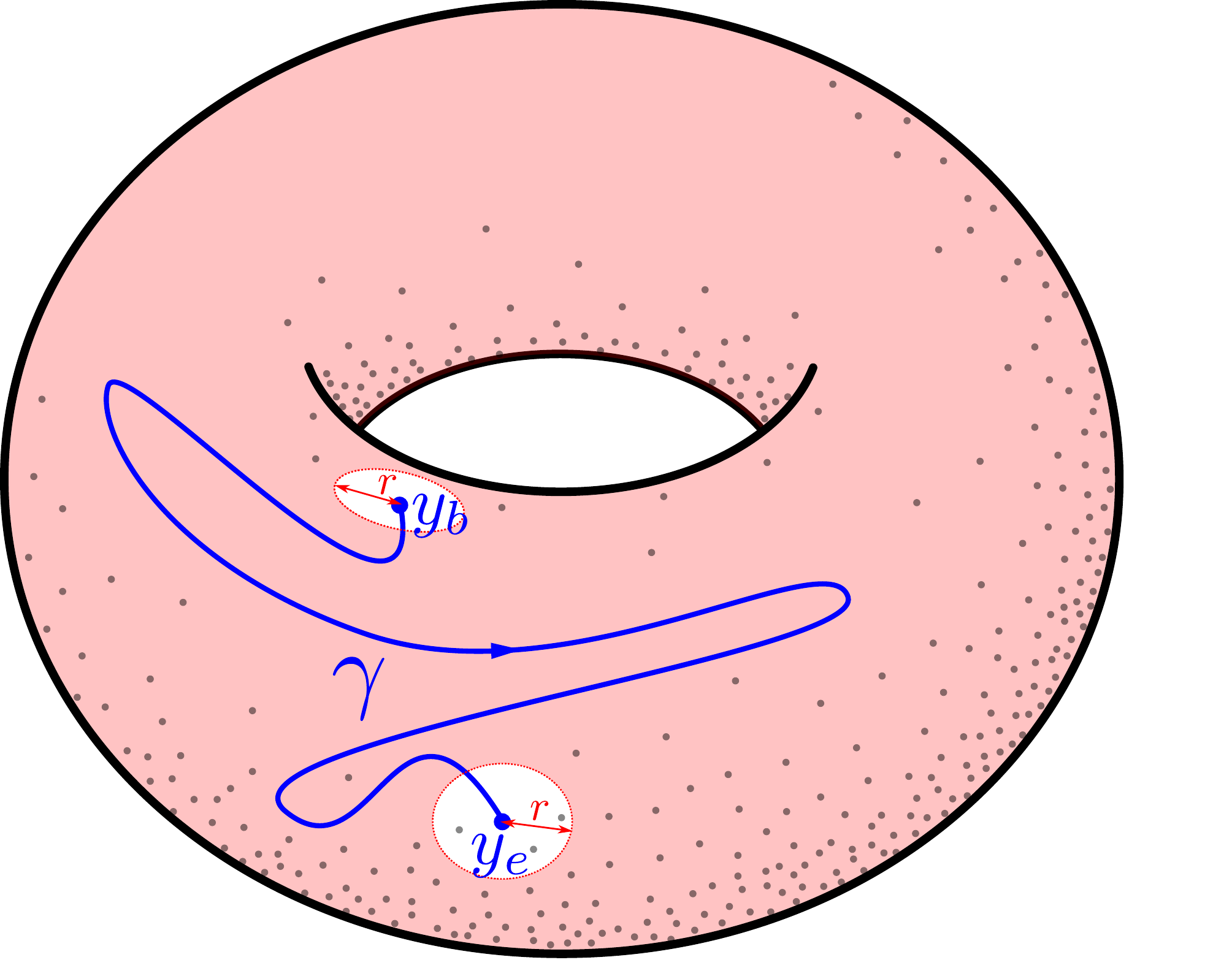}
    \caption{Setup for Lemma \ref{thm:accurate even more}. The electric field $\dd v$ is nonzero only in the red region. \vspace{2.45cm}}
    \label{fig:lemma44torus}
  \end{subfigure}
\end{figure}

\subsection{From Kubo response to adiabatic response}

The setup of adiabatic response demands that we specify a slow change in the Hamiltonian. In the context of Hall fluids, the natural change is to slowly thread a flux.  Hence, we take up the setup introduced in Section \ref{subsec:ham with flux}, we choose a vortex free vector potential $A$ and define 
$$
H_s :=H_{sA}
$$
As we saw in Section \ref{subsec:adiabatic driving} a special role is played by the derivative $W:=\partial_s H_s \big|_{s=0}$. In our case, this derivative is locally computed to be 
\begin{equation} \label{eq: gauge in adiabatic response}
(W)_{\Sigma}= \iu \: ([\langle \theta,n\rangle, H]))_{\Sigma}
\end{equation}
where $\Sigma$ is a region in which $A$ is exact, i.e.\ $A=\dd \theta$ on $\Sigma^e$, see Section \ref{subsec:ham with flux}.  
The commutator in the right-hand side of the previous formula reminds us of the frequency derivative linking the adiabatic and Kubo responses.
If $J$ has support in the far interior of $\Sigma$, we can pretend that \eqref{eq: gauge in adiabatic response} holds globally, leading to
\begin{theorem} \label{thm: link adiabatic and kubo}
Let $J$ be supported in $X$, such that
 $\mathrm{dist}(X,\Gamma \setminus\Sigma) \geq r$, with $A=\dd\theta$ on $\Sigma^e$.  Then 
$$
\left| \chi^{\mathrm{ad}}_{J,H_s} - \chi_{J,V} \right|  = \caO(r^{-\infty}),\qquad \text{with}\quad  V=  \langle \theta,n\rangle.
$$ 
\end{theorem}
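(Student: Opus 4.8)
\section*{Proof proposal}

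The plan is to reduce the adiabatic response to a Kubo response by means of \eqref{eq:adiabatic response is derivative of linear response}, which under our standing hypotheses is made rigorous by Lemma~\ref{mhw}: writing $W=\partial_sH_{sA}\big|_{s=0}$,
\[
\chi^{\mathrm{ad}}_{J,H_s}=-\iu\,\partial_\nu\chi_{J,W}(\nu)\big|_{\nu=0}.
\]
(The hypotheses of Lemma~\ref{mhw} hold for $H_s=H_{sA}$ after flattening the ramp near $s=-1$, which does not change $\chi^{\mathrm{ad}}$.) From here there are two moves. First, inside the region $\Sigma$ where $A$ is exact, $W$ is a commutator built from $V=\langle\theta,n\rangle$: this is \eqref{eq: gauge in adiabatic response}, obtained by writing $sA=\dd(s\theta)$ on $\Sigma^e$, applying the local gauge identity \eqref{eq: ha locally gauge}, and differentiating at $s=0$. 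Second, the Kubo coefficient $\chi_{J,\cdot}(\nu)$ depends on its second slot only through a quasi-local neighbourhood of $\mathrm{supp}\,J=X$ (Lemma~\ref{lem: locality of lr}), so the part of $W$ living outside $\Sigma$ is invisible up to $\caO(r^{-\infty})$. A short integration by parts then turns $\partial_\nu\chi_{J,\iu[V,H]}(\nu)\big|_{\nu=0}$ into $\chi_{J,V}$.

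For the localization, note that since $\mathrm{dist}(X,\Gamma\setminus\Sigma)\ge r$ the set $X^{r/2}$ lies inside $\Sigma$, so the restriction $(\cdot)_{X^{r/2}}$ factors through $(\cdot)_\Sigma$ and hence $W_{X^{r/2}}=(\iu[V,H])_{X^{r/2}}$ exactly. Using Lemma~\ref{lem: locality of lr} in its explicit form $\chi_{J,\cdot}(\nu)=\iu\,\omega([\caI_\nu(J),\cdot])$, with $\caI_\nu(J)$ quasi-local around $X$ uniformly for $\nu$ in a fixed complex neighbourhood of $0$ (the Kubo response is analytic in $\nu$ for $|\nu|<g$, as its spectral representation over the states above the gap shows), one gets $\chi_{J,W}(\nu)=\chi_{J,\iu[V,H]}(\nu)+\caO(r^{-\infty})$; a Cauchy estimate passes this to $\partial_\nu\big|_{\nu=0}$. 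A minor point: $\theta$, hence $V$, need not be norm-bounded uniformly in $L$; but since $[J,N]=[H,N]=0$, shifting $\theta$ by a constant changes $V$ only by a multiple of $N$ and does not affect $\chi_{J,\cdot}$, so one may normalize $\theta$ to vanish at a site of $X$, making it $\caO(r)$ on the region that actually matters.

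For the integration by parts, set $f(t):=\omega([V(-t),J])$ and observe that $(\iu[V,H])(-t)=\tfrac{d}{dt}V(-t)$, so $\chi_{J,\iu[V,H]}(\nu)=\iu\lim_{\epsilon\to0^+}\int_0^\infty f'(t)\,\ep{(\iu\nu-\epsilon)t}\,\dd t$. Integrating by parts: the boundary term at $t=\infty$ vanishes for $\epsilon>0$, the boundary term at $t=0$ contributes the $\nu$-independent amount $-\iu\,\omega([V,J])$, and $\epsilon\int_0^\infty f(t)\,\ep{(\iu\nu-\epsilon)t}\,\dd t\to0$ because the gap confines the Fourier support of $f$ to $\{|\omega|\ge g\}$. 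Hence $\chi_{J,\iu[V,H]}(\nu)$ is a $\nu$-independent constant plus $(\pm\iu)\,\nu\,\chi_{J,V}(\nu)$, so $\partial_\nu\chi_{J,\iu[V,H]}(\nu)\big|_{\nu=0}=\pm\iu\,\chi_{J,V}$, and combining with the localization and \eqref{eq:adiabatic response is derivative of linear response} gives $\chi^{\mathrm{ad}}_{J,H_s}=\chi_{J,V}+\caO(r^{-\infty})$; the overall sign here is fixed by keeping the Peierls/orientation conventions of Sections~\ref{subsec:ham with flux}--\ref{subsec:current operators} consistent throughout.

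The step I expect to fight with is the localization: establishing, with genuine $\caO(r^{-\infty})$ control, that the adiabatic response of the \emph{local} observable $J$ feels the driving $W$ only through a quasi-local neighbourhood of $\mathrm{supp}\,J$. This rests on the quasi-locality estimates behind Lemma~\ref{lem: locality of lr} (equivalently the quasi-adiabatic-flow bounds of \cite{bachmann2017adiabatic}) and, crucially, on their uniformity in $\nu$ over a complex neighbourhood of $0$, so that the bound survives differentiation at $\nu=0$; pointwise-in-$\nu$ control would not suffice. The reduction in the first paragraph and the integration by parts are, by comparison, routine once \eqref{eq:adiabatic response is derivative of linear response} and the spectral gap are in hand.
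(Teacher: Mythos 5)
Your strategy is sound and the theorem is reachable along your route, but it is a genuinely different implementation from the paper's. The paper never differentiates in $\nu$: it starts from $\chi^{\mathrm{ad}}_{J,H_s}=\iu\,\omega([\caI(\caI(W)),J])$ with $W=\partial_sH_s|_{s=0}$ (the double $\caI$ is where your $\partial_\nu$ is hiding, since on off-diagonal operators $\caI=\iu\,\mathrm{ad}_H^{-1}$), moves one $\caI$ onto $J$ using oddness of the weight function, substitutes $W\to\iu[V,H]$ inside the remaining $\caI$ via the locality of $\caI$ (Lemma \ref{lem: locality with cai}, using \eqref{eq: gauge in adiabatic response} exactly as you do), and then cancels that $\caI$ against $\mathrm{ad}_H$ by Lemma \ref{lem:I is inverse}(iv) to land directly on $\iu\,\omega([\caI(V),J])=\chi_{J,V}$. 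Your integration by parts in $t$ is the time-domain shadow of that last cancellation, and your localization step plays the role of Lemma \ref{lem: locality with cai}; what the algebraic route buys is that everything happens at $\nu=0$, so no uniformity in $\nu$ is ever required.

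That uniformity is precisely where your argument has a genuine gap. You transfer the bound $\chi_{J,W}(\nu)-\chi_{J,\iu[V,H]}(\nu)=\caO(r^{-\infty})$ to the derivative at $\nu=0$ by a Cauchy estimate on a complex disc, invoking quasi-locality of $\caI_\nu(J)$ "uniformly in a complex neighbourhood of $0$". But the quasi-local representation behind Lemma \ref{lem: locality of lr}, namely $\chi_{J,G}(\nu)=\iu\int\dd t\;W(t)\,\ep{\iu\nu t}\,\omega([G(-t),J])$, does not extend to complex $\nu$: the filter $W(t)$ decays only subexponentially (it cannot decay exponentially, since $\widehat W$ is prescribed only outside $[-g/2,g/2]$ and is not entire), so $W(t)\ep{\iu\nu t}$ fails to be integrable for $\mathrm{Im}\,\nu\neq0$ and no $\caO(r^{-\infty})$ bound uniform on a complex neighbourhood is available by this mechanism; analyticity of $\chi(\nu)$ itself (via the resolvent formula) does not rescue the locality estimate. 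The fix is cheap: differentiate the representation in real $\nu$, which replaces the weight $W(t)$ by $tW(t)$ --- still in $L^1$ and still rapidly decaying --- so the identical Lieb--Robinson argument gives $\partial_\nu\chi_{J,W}(0)-\partial_\nu\chi_{J,\iu[V,H]}(0)=\caO(r^{-\infty})$ with no excursion into the complex plane. Two smaller points: Lemma \ref{mhw} as stated only yields \eqref{eq:LR form of adiabatic response}, so the interchange of $\partial_\nu$ with $\lim_{\epsilon\to0^+}$ needed for \eqref{eq:adiabatic response is derivative of linear response} must be justified (it follows from the gap, by bounds of the type in Lemma \ref{lem: ep to zero} applied to the $\nu$-derivative); and leaving the overall sign to "conventions" is not acceptable in a final write-up --- once \eqref{def: cai} and \eqref{eq: gauge in adiabatic response} are fixed, your integration by parts determines it, and it must be checked to come out as $+\chi_{J,V}$.
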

The above theorem tells us that adiabatically switching on a vector potential evokes the same response as driving with an electric field (derived from the electrostatic potential $\theta$). 
This is demystified by recalling the standard electrodynamics relation
$E={\dd} v -\partial_t A$ and noting that we have here an $A$ that is linear in the rescaled time $s=\epsilon t$, and the observable $J$ allows to restrict to a region where $A={\dd} \theta$. Hence $E=-{\dd} \theta$ and $\theta$ plays the role of an electrostatic potential $v$. This is precisely the content of the above theorem. 

To belabour this point, we provide a corollary to Theorem \ref{thm: link adiabatic and kubo} that applies to a Hall setup.   Consider a path $\gamma$ that has the regularity also required in Lemma \ref{thm:accurate even more} ( \ie $\gamma$ is part of the boundary of some set) but we allow for the path to be closed as well. 
We consider a vortex-free vector potential $A$ that vanishes in the balls of radius $r$ around the points $y_b$ and $y_e$ (for closed paths, there is no requirement, and then we formally take $r=\caO(L)$). 
Define $\caE:=\int_\gamma A $ (the suggestion is that this is an \emph{emf}, ie.\ electromotive force)
%
\begin{corollary}
Let $H_s=H_{sA}$ with $A$, $J=J_\gamma$ and $\caE$ as described above, then we have
$$\left| \kappa - \frac{\chi^{\mathrm{ad}}_{J, H_{s}}}{\caE}\right| = \caO(r^{-\infty}).$$
\end{corollary}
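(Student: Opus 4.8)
The plan is to chain the two main results of this section. Theorem~\ref{thm: link adiabatic and kubo} rewrites the adiabatic response as a Kubo response, $\chi^{\mathrm{ad}}_{J,H_{sA}}=\chi_{J,V}+\caO(r^{-\infty})$ with $V=\langle\theta,n\rangle$ and $A=\dd\theta$ on a region $\Sigma$ surrounding $\mathrm{supp}(J)$; Lemma~\ref{thm:accurate even more} then rewrites that Kubo response in terms of the curvature, $\kappa=\chi_{J,V}/\Delta v+\caO(r^{-\infty})$. The crux is that the potential $\theta$ handed to us by the first step is an admissible potential for the second step, and that its drop $\Delta v$ along $\gamma$ equals the ``emf'' $\mathcal E=\int_\gamma A$; one then divides by $\mathcal E$ and adds the two errors.

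Carrying this out for an open path $\gamma$: choose $\Sigma$ to be a simply connected neighbourhood of $\gamma$, e.g.\ $\Sigma=\gamma^{\,r+R}$ with $R$ the range of the Hamiltonian, narrowed where $\gamma$ comes close to itself so as to stay simply connected (possible since $\gamma$, being part of a non-self-intersecting boundary, is an embedded arc). Since $A$ is vortex-free it is exact on $\Sigma^e$, so there is a $\theta$ with $A|_{\Sigma^e}=\dd\theta$, and one can arrange $\norm{\dd\theta}\le C$. Now $\mathrm{supp}(J_\gamma)\subset\gamma^{\,R}$ is at distance $\ge r$ from $\Gamma\setminus\Sigma$, so Theorem~\ref{thm: link adiabatic and kubo} applies. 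Because $A$ vanishes on the balls of radius $r$ around $y_{\rm b},y_{\rm e}$, which lie inside $\gamma^{\,r}\subset\Sigma$, the function $\theta$ is constant there; hence $v:=\theta$ is flat on those balls and is an admissible potential in Lemma~\ref{thm:accurate even more}. Simple connectedness of $\Sigma$ makes $\int_\gamma\dd\theta$ path-independent, so $\Delta v=\theta(y_{\rm e})-\theta(y_{\rm b})=\int_\gamma A=\mathcal E$. Putting the two estimates together,
\[
\Big|\,\kappa-\frac{\chi^{\mathrm{ad}}_{J,H_{sA}}}{\mathcal E}\,\Big|\ \le\ \Big|\,\kappa-\frac{\chi_{J,V}}{\Delta v}\,\Big|+\frac{1}{|\mathcal E|}\big|\,\chi_{J,V}-\chi^{\mathrm{ad}}_{J,H_{sA}}\,\big|\ =\ \caO(r^{-\infty}).
\]

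The harder case, and the main obstacle, is the closed path. If $\gamma$ winds around the torus then no neighbourhood of $\gamma$ carries $A$ as an exact form — the holonomy obstructing exactness is exactly $\mathcal E=\oint_\gamma A$ — so Theorem~\ref{thm: link adiabatic and kubo} cannot be quoted directly. The way I would proceed is to use locality of both the current $J_\gamma$ and of the responses, together with gauge invariance (Lemma~\ref{lem: curvature gauge}), to replace $A$ near $\gamma$ by $\mathcal E$ times the canonical holonomy-one one-form in the cycle dual to $\gamma$, and then to recognise $\iu\,n_X$ (for $\gamma=\partial X$, cf.\ \eqref{eq: current as region current}) as agreeing, up to a local operator near $\gamma$ and an error $\caO(L^{-\infty})$, with a generator obeying \eqref{eq: property adiabatic generators} for that flux; feeding this into the adiabatic perturbation theory of \cite{bachmann2017adiabatic} as in Lemma~\ref{mhw} and into the identity $\kappa=\iu\,\omega([K_1,K_2])$ gives $\chi^{\mathrm{ad}}_{J,H_{sA}}=\mathcal E\,\kappa+\caO(L^{-\infty})$, i.e.\ the claim with $r=\caO(L)$. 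The delicate steps here — which essentially re-run part of the proof of Theorem~\ref{thm: quantization} — are to control the replacement of the manifestly nonlocal generator $P(\partial P)(1-P)+\mathrm{h.c.}$ by the local $n_X$, and to keep the ``seam'' where $A\ne\dd\theta$ at distance $\caO(L)$ from $\gamma$ so that it contributes only $\caO(L^{-\infty})$.
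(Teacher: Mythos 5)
Your proposal is correct and follows essentially the same route as the paper: for an open path you chain Theorem~\ref{thm: link adiabatic and kubo} with Lemma~\ref{thm:accurate even more} via a local gauge $\theta$ with $A=\dd\theta$ near $\gamma$, verifying that $\theta$ is an admissible potential and that $\Delta v=\int_\gamma A=\caE$, which is exactly the paper's (one-line) argument with the checks made explicit. For closed paths the paper likewise only sketches two alternatives, and your plan of identifying the driving and the current with flux derivatives of a twist Hamiltonian and invoking the curvature formula is the paper's first alternative (``follow the steps of the proof of Theorem~\ref{thm:accurate}''), modulo the caveat that the relevant generator of parallel transport is $\caI(\partial_\phi H_A)\approx\overline{n_X}$ rather than $\iu\,n_X$ itself.
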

In the case of an open path, this corollary is an immediate consequence of Theorem \ref{thm: link adiabatic and kubo} and Lemma \ref{thm:accurate even more}, as one can always choose a gauge $\theta$ locally so that $A=\dd \theta$. For closed paths, this might be impossible. In that case one can for example follow the steps of the proof of Thoerem \ref{thm:accurate}, or, alternatively,  still use  Theorem \ref{thm: link adiabatic and kubo} for several paths glued together in regions of diameter $cL$ where $\dd A$ vanishes.

%
%


\section{Proofs}

%

%
\subsection{Preliminaries} \label{sec: preliminaries}
Let $W \in L^\infty(\bbR)\cap L^1(\bbR)$ be an odd function such that 
\begin{enumerate}
\item $
\left\vert W(t)\right\vert = \caO(|t|^{-\infty})$
\item $\widehat W(\zeta) = \frac{-\iu}{\sqrt{2\pi} \zeta}, \quad \text{if } \abs{\zeta} \geq  g/2.$ 
\end{enumerate}
where $\widehat{W}$ is the Fourier transform of $W$.  See  \cite{HastingsWen,Sven} for a construction of such $W$.  Then we define the map $\caI$ (acting on operators $O$)
\begin{equation}\label{def: cai}
\caI(O)=\caI_{H}(O) := \int_{-\infty}^\infty \dd t \; W(t) \ep{\iu t H} O \ep{-\iu t H}.
\end{equation}
Furthermore, we need the \emph{off-diagonal projection} 
$$
O \mapsto \bar{O}=P O P^\perp + P^\perp  O P,\qquad  P^\perp =1-P
$$
where we recall that $P$ is the (one-dimensional) ground state projection of $H$. 
We summarize the useful properties of these objects.
\begin{lemma}\label{lem:I is inverse}
Let $O,O'$ be arbitrary operators. We write $\mathrm{ad}_H(O)=[H,O]$. 
\begin{enumerate}
\item $ \omega(OO')=\omega(\bar{O}O')=\omega({O}\bar{O'})=\omega(\bar{O}\bar{O'})$.
\item  $\overline{\caI(O)}={\caI( \bar O)}$.
\item   $\mathrm{ad}_H \,\caI(\bar O) = \iu \bar O$. 
\item $\caI \, \mathrm{ad}_H( \bar O) = \iu  \bar O$. 
\item If $O$ has support in $S$ then $\norm{\caI(O)-(\caI(O))_{S^r}} =   \norm{O} \abs{S} \times  \caO(r^{-\infty})$. 
\item $\norm{\caI(O)} \leq \norm{W}_1 \norm{O}$.
\end{enumerate}
\end{lemma}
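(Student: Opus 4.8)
The plan is to verify the six items of Lemma~\ref{lem:I is inverse} essentially in the order listed, since each is either a one-line computation or a standard quasi-adiabatic-evolution estimate. The central object is the map $\caI$ from \eqref{def: cai}, built from the odd kernel $W$ whose Fourier transform is $\widehat W(\zeta) = -\iu/(\sqrt{2\pi}\zeta)$ for $\abs\zeta \geq g/2$. The guiding slogan is that $\caI$ inverts $\mathrm{ad}_H$ on the off-diagonal part of any operator: indeed, in the spectral representation of $H$, the operator $\ep{\iu tH} O \ep{-\iu tH}$ has matrix elements carrying a phase $\ep{\iu t (E_a - E_b)}$, so $\caI(O)_{ab} = \sqrt{2\pi}\,\widehat W(E_b - E_a)\, O_{ab}$, and on off-diagonal elements connecting the ground state to the rest of the spectrum one has $\abs{E_b - E_a} \geq g \geq g/2$, so $\sqrt{2\pi}\,\widehat W(E_b-E_a) = -\iu/(E_b-E_a)$, which is exactly $(\mathrm{ad}_H)^{-1}$ acting on that element (up to the factor $\iu$ bookkeeping).

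Concretely I would proceed as follows. For (1), write $OO' = (PO + P^\perp O)(O'P + O'P^\perp)$ wait — better: sandwich with $P$ on both sides. Since $\omega(X) = \Tr(PXP)$ and $P$ is rank one, $\omega(OO') = \Tr(POO'P) = \Tr(POP \cdot PO'P) + \Tr(POP^\perp \cdot P^\perp O'P)$; the diagonal-diagonal term is $\omega(O)\omega(O')$ while $\overline{O}$ and $\overline{O'}$ retain exactly the off-diagonal pieces, and one checks all four listed expressions expand to the same sum of these two scalars — this is pure algebra with $P^2 = P$, $PP^\perp = 0$. For (2), note $\ep{\iu tH}$ commutes with $P$ (as $P$ is a spectral projection of $H$), so $\ep{\iu tH}(\overline O)\ep{-\iu tH} = \overline{\ep{\iu tH} O \ep{-\iu tH}}$; integrating against $W(t)$ and using linearity of the off-diagonal projection gives $\overline{\caI(O)} = \caI(\overline O)$. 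For (3) and (4), use the spectral/matrix-element computation above: on off-diagonal matrix elements $\sqrt{2\pi}\,\widehat W(\omega) = -\iu/\omega$ holds because the gap forces $\abs\omega \geq g/2$, and then $\mathrm{ad}_H$ and $\caI$ are mutually inverse up to the factor $\iu$; care is needed to note that $\caI(\overline O)$ is itself off-diagonal (by (2)) so applying $\mathrm{ad}_H$ lands us back in the off-diagonal sector and the identity closes. For (6), $\norm{\caI(O)} \leq \int \abs{W(t)}\,\norm{\ep{\iu tH} O \ep{-\iu tH}}\,\dd t = \norm W_1 \norm O$ since conjugation by a unitary is an isometry. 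For (5), this is the locality estimate: since $W(t) = \caO(\abs t^{-\infty})$, split the integral at $\abs t \leq T$ and $\abs t > T$; on the tail use (6)-type bounds giving $\norm O \times \caO(T^{-\infty})$, and on the bulk use a Lieb-Robinson bound — $\ep{\iu tH} O \ep{-\iu tH}$ is approximated to accuracy $\norm O \abs S \caO((r - v\abs t)^{-\infty})$ by its restriction to $S^r$ for $\abs t \leq T$ — and optimize $T \sim r/(2v)$ to get $\norm O \abs S \caO(r^{-\infty})$.

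I expect item (5) to be the main obstacle, since it is the only one that is not a formal manipulation: it requires invoking a Lieb-Robinson bound for the (even, finite-range, bounded) local Hamiltonian $H$, controlling the spatial spreading of $\ep{\iu tH} O \ep{-\iu tH}$, and then balancing this against the rapid decay of $W$. The bookkeeping of the $\abs S$ factor (from covering $S$ by $\abs S$ single-site supports, or from the surface area of $S^r$) and the uniformity in $L$ also live here. Items (3)–(4) have a subtlety worth flagging — one must be careful that $\caI$ does \emph{not} invert $\mathrm{ad}_H$ on the diagonal block (there $\mathrm{ad}_H$ is zero and not invertible), which is precisely why the off-diagonal projection appears and why property (2) is needed as an input to (3). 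Everything else is routine algebra with the rank-one projection $P$ and the unitarity of the Heisenberg evolution.
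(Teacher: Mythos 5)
Your proposal follows essentially the same route as the paper: the paper's proof consists of observing that $\caI=\sqrt{2\pi}\,\widehat W(\mathrm{ad}_H)$ by spectral calculus for $\mathrm{ad}_H$ on the Hilbert--Schmidt space, so that $\caI$ and $-\iu\,\mathrm{ad}_H$ are mutually inverse on the spectral subspace $|\mathrm{ad}_H|\geq g/2$, which contains all off-diagonal operators by the gap assumption; item (v) is attributed to the Lieb--Robinson bound and the rest is declared obvious. Your matrix-element computation in the eigenbasis of $H$ is the same argument in different clothing, your treatment of (2), (6) matches, and your split of the time integral for (v) (Lieb--Robinson light cone for $|t|\leq T$, rapid decay of $W$ for $|t|>T$, then $T\sim r$) is exactly what the paper's one-line invocation of Lieb--Robinson is shorthand for.

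One concrete step in your write-up would fail as stated, namely item (1). Your own decomposition gives $\omega(OO')=\Tr(POPO'P)+\Tr(POP^{\perp}O'P)=\omega(O)\omega(O')+\omega(\bar O\bar O')$, whereas $P\bar OP=0$ forces $\omega(\bar OO')=\omega(O\bar O')=\omega(\bar O\bar O')=\Tr(POP^{\perp}O'P)$ with no diagonal--diagonal contribution. Hence the four expressions do \emph{not} all ``expand to the same sum of these two scalars'': the first one carries the extra term $\omega(O)\omega(O')$, and the chain of equalities in (1) holds for arbitrary $O,O'$ only modulo that term (equivalently, it holds exactly whenever $\omega(O)\omega(O')=0$, in particular after antisymmetrization, i.e.\ for $\omega([O,O'])$, which is the only way the lemma is ever used in the paper). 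You should either record the correct identity $\omega(OO')=\omega(O)\omega(O')+\omega(\bar O\bar O')$ and note that the cross term cancels in commutators, or restrict (1) to the case actually needed. The remaining items, including the point you rightly flag that $\caI$ does not invert $\mathrm{ad}_H$ on the diagonal block and that (2) guarantees $\caI(\bar O)$ stays off-diagonal, are correct and in line with the paper.
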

\begin{proof}
We view the algebra of operators as a Hilbert space with the Hilbert Schmidt scalar product (remember that all is finite-dimensional).  This makes $\mathrm{ad}_H$ into a Hermitian operator and we define $\widehat{W}(\mathrm{ad}_H)$ by spectral calculus. 
From \eqref{def: cai}, we see that $\caI=\sqrt{2\pi}\widehat{W}(\mathrm{ad}_H)$.  This proves that $\caI$ and $-\iu \mathrm{ad}_H$ are inverses on the spectral subspace $ |\mathrm{ad}_H | \geq g/2 $. By the gap assumption, this subspace contains all $\bar{O}$.  Hence $(iii), (iv)$ are shown.  The claim $(v)$ follows by the Lieb-Robinson bound and the remaining claims are obvious.  
\end{proof}
\begin{lemma}\label{lem: adiabatic generators}
Consider vector potentials $A=A(\phi_1,\phi_2)$ threading fluxes $(\phi_1,\phi_2)$ as defined in Section \ref{subsec:adiabatic curvature}. Then 
$$K_{j}=\caI(\partial_{\phi_j} H_{A}), \qquad j=1,2$$
(with derivatives taken at $\phi=0$)
are generators of parallel transport, i.e.\ they satisfy \eqref{eq: property adiabatic generators}. 
\end{lemma}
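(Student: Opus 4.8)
The plan is to verify the defining relation \eqref{eq: property adiabatic generators} for parallel-transport generators, namely that $\partial_{\phi_j} P = \iu\,[K_j, P]$ with $K_j = \caI(\partial_{\phi_j} H_A)$, using the algebraic properties of $\caI$ collected in Lemma \ref{lem:I is inverse}. First I would recall standard first-order perturbation theory for the spectral projection $P(\phi)$ of the gapped Hamiltonian $H(\phi)$: differentiating the eigenprojection identity (e.g.\ via the Riesz formula $P(\phi) = \frac{1}{2\pi\iu}\oint (z - H(\phi))^{-1}\,dz$ around a contour enclosing the ground state energy) gives that $\partial_j P$ is purely off-diagonal, i.e.\ $\partial_j P = \overline{\partial_j P}$, and moreover $\mathrm{ad}_H(\partial_j P) = \iu\,\overline{(\partial_j H)}$, where $\partial_j H := \partial_{\phi_j} H_A|_{\phi=0}$. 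The second relation is the key input; it is the standard statement that $[H, \partial_j P] = -[\partial_j H, P]$ (obtained by differentiating $[H, P] = 0$ and projecting), rewritten using the off-diagonal projection.

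Granting these two facts, the computation is short. Apply $\caI$ to the identity $\mathrm{ad}_H(\partial_j P) = \iu\,\overline{\partial_j H}$ and use property (iv) of Lemma \ref{lem:I is inverse} (that $\caI\,\mathrm{ad}_H$ is the identity, up to the factor $\iu$, on off-diagonal operators): since $\partial_j P$ is off-diagonal, $\caI\,\mathrm{ad}_H(\partial_j P) = \iu\,\partial_j P$, so $\iu\,\partial_j P = \iu\,\caI(\overline{\partial_j H})$, i.e.\ $\partial_j P = \caI(\overline{\partial_j H}) = \overline{\caI(\partial_j H)} = \overline{K_j}$, using property (ii) of the same lemma. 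It then remains to see that $\overline{K_j} = \iu\,[K_j, P]$ — but for any operator $O$ one has, directly from the definition $\bar O = POP^\perp + P^\perp O P$ and $\dim P = 1$, that $\iu[\bar O, P] = \iu(POP^\perp - P^\perp O P) \cdot(\pm 1)$, and more to the point a short computation gives $\iu[K_j,P] = \iu[\overline{K_j},P]$ (the diagonal part of $K_j$ commutes with $P$ up to the required order — actually $[P\,O\,P + P^\perp O P^\perp, P]=0$ exactly) and $\iu[\overline{K_j},P]=\overline{K_j}$ when... here one must be slightly careful: the clean identity is $\overline{O} = -\iu\,\mathrm{ad}_H(\caI(\bar O))/\text{(nothing)}$; instead the direct route is to observe $\partial_j P$ is off-diagonal and $\iu[K_j,P]$ equals the off-diagonal part of $\iu[K_j,P]$, which one checks is exactly $\overline{K_j} = \partial_j P$. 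Concretely: $\iu[K_j,P] = \iu(K_j P - P K_j)$; writing $K_j = PK_jP + \overline{K_j} + P^\perp K_j P^\perp$, the diagonal pieces drop out of the commutator with $P$, leaving $\iu[\overline{K_j},P] = \iu(P^\perp K_j P - P K_j P^\perp)$, which is precisely $\overline{K_j}$ if one tracks that $\iu$ times the antisymmetric combination reproduces $\bar{O}$ — and this last point is exactly the content of property (iii), $\mathrm{ad}_H\,\caI(\bar O)=\iu\bar O$, specialized appropriately, or can simply be checked by hand in the rank-one case.

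So the skeleton is: (1) perturbation theory gives $\partial_j P$ off-diagonal and $\mathrm{ad}_H(\partial_j P) = \iu\,\overline{\partial_j H}$; (2) invert $\mathrm{ad}_H$ using Lemma \ref{lem:I is inverse}(ii),(iv) to get $\partial_j P = \overline{K_j}$; (3) a purely algebraic identity in the rank-one ground-state case shows $\overline{K_j} = \iu[K_j,P]$, giving \eqref{eq: property adiabatic generators}. I would also remark that one does not even need $\partial_j H$ itself to be off-diagonal — only $\caI$ applied to it, and $\caI$ automatically lands in the right place because $\caI(O) = \caI(\bar O) + \caI(O - \bar O)$ and $\caI$ kills the diagonal part up to... no: actually $\caI$ does not kill the diagonal part, but the diagonal part of $K_j$ commutes with $P$, so it is invisible in $\iu[K_j,P]$; this is the reason the lemma holds with the simple choice $K_j = \caI(\partial_j H)$ rather than $\caI(\overline{\partial_j H})$.

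The main obstacle is really just getting the off-diagonal bookkeeping in step (3) exactly right — making sure the factors of $\iu$ and the signs conspire so that $\iu[K_j,P]$ reproduces $\partial_j P$ and not its negative or adjoint — together with a clean statement of the first-order perturbation identity $[H,\partial_j P] = -[\partial_j H, P]$ in the off-diagonal language of Lemma \ref{lem:I is inverse}. Everything else (existence and properties of $\caI$, the gap, analyticity of $P(\phi)$ near $\phi = 0$) is already available from the earlier parts of the paper and from Assumption \ref{tilde gap}. Locality of $K_j$, though not part of the asserted statement, follows for free from Lemma \ref{lem:I is inverse}(v) since $\partial_j H_A$ is a sum of local terms.
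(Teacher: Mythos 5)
The paper itself does not prove this lemma (it defers to the cited references), so I am comparing your argument against the standard proof, which runs: differentiate $[H(\phi),P(\phi)]=0$ to get $\mathrm{ad}_H(\partial_j P)=-[\partial_j H,P]$; note $\partial_j P$ is off-diagonal (from $P^2=P$); apply $\caI$ and use Lemma \ref{lem:I is inverse}(iv) on the left-hand side and the fact that $\caI$ commutes with left/right multiplication by $P$ (because $P$ commutes with $\ep{\iu tH}$) on the right-hand side, giving $\iu\,\partial_j P=-\caI([\partial_j H,P])=-[\caI(\partial_j H),P]=-[K_j,P]$, i.e.\ $\partial_j P=\iu[K_j,P]$. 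Your proposal has all of these ingredients, but the algebra goes wrong at the decisive step and never recovers.

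Concretely: your identity $\mathrm{ad}_H(\partial_j P)=\iu\,\overline{\partial_j H}$ is false. The correct right-hand side is $-[\partial_j H,P]=P\,\partial_j H\,P^\perp-P^\perp\,\partial_j H\,P$, which is the \emph{antisymmetric} off-diagonal combination and carries no factor of $\iu$; the off-diagonal projection $\overline{\partial_j H}=P\,\partial_j H\,P^\perp+P^\perp\,\partial_j H\,P$ is the symmetric one, and the two are genuinely different operators. As a result your intermediate conclusion $\partial_j P=\overline{K_j}$ is also false, and the identity you then need in step (3), namely $\overline{K_j}=\iu[K_j,P]$, does not hold: for a Hermitian $K$ with off-diagonal matrix element $a=\langle m|K|0\rangle$ one has $(\overline{K})_{m0}=a$ while $(\iu[K,P])_{m0}=\iu a$, so they agree only if $a=0$. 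Your own text signals the trouble (``here one must be slightly careful\dots'') but the patch you sketch — checking ``by hand in the rank-one case'' that $\iu$ times the antisymmetric combination reproduces $\bar O$ — asserts exactly the false identity. The fix is simply not to pass through $\overline{\partial_j H}$ at all: keep $-[\partial_j H,P]$ as is, and use $\caI(OP)=\caI(O)P$, $\caI(PO)=P\caI(O)$ to pull the commutator with $P$ outside of $\caI$. Then Lemma \ref{lem:I is inverse}(iv) applied to the off-diagonal operator $\partial_j P$ immediately yields $\partial_j P=\iu[K_j,P]$ with all signs and factors of $\iu$ in place. Your closing observation — that the diagonal part of $K_j$ drops out of $[K_j,P]$, which is why $K_j=\caI(\partial_j H)$ works without first projecting $\partial_j H$ off-diagonal — is correct and is indeed the right way to think about it.
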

For the proofs, see \cite{HastingsWen,Sven} for the case of spin systems and \cite{bru2016lieb,Teufel17,BrunoInPrep} for fermionic systems. 
\begin{lemma}\label{lem: locality with cai}
 Let $G,G'$ be local Hamiltonians in the sense of Section \ref{sec: spaces}. Let $Z$ be the intersection of their supports. Then 
 $$
 [\caI(G),\caI(G')]=   [\caI(G_{Z^r}),\caI(G'_{Z^r})] +\caO(r^{-\infty}),\qquad    [\caI(G),G']=   [\caI(G_{Z^r}),G'_{Z^r}] +\caO(r^{-\infty})
 $$
\end{lemma}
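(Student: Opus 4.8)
The plan is to reduce everything to the locality of the map $\caI$, specifically property $(v)$ of Lemma \ref{lem:I is inverse}, combined with the Lieb-Robinson-type decay of commutators and the fact that $G,G'$ being local Hamiltonians means they are sums of uniformly bounded, uniformly local terms whose number grows at most polynomially with $L$. The key observation is that a commutator $[\caI(G),\caI(G')]$ picks up contributions only where (a neighbourhood of) the support of $G$ meets (a neighbourhood of) the support of $G'$, i.e.\ near $Z$, because the quasi-local smearing by $\caI$ only spreads operators by a tail of size $\caO(r^{-\infty})$.

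First I would set up notation: write $G=\sum_X G_X$ and $G'=\sum_{X'} G'_{X'}$ as in Section \ref{sec: spaces}, so each $G_X$ has $\norm{G_X}\le m$ and $\mathrm{diam}(X)\le R$, and similarly for $G'$. For a single term $G_X$ with $X\subset S$ (the support of $G$), Lemma \ref{lem:I is inverse}$(v)$ gives $\norm{\caI(G_X) - (\caI(G_X))_{X^\rho}} = \caO(\rho^{-\infty})$, with the constant controlled by $m$ and $\abs{X}\le C(R)$. I would then argue that if $X^\rho \cap (X')^\rho = \emptyset$ then $\norm{[\caI(G_X),\caI(G'_{X'})]}=\caO(\rho^{-\infty})$: indeed, replacing each factor by its restriction to $X^\rho$, resp.\ $(X')^\rho$, costs only $\caO(\rho^{-\infty})$ (using boundedness of the other factor via $(vi)$), and the restricted operators commute since their spatial supports are disjoint and we are in the even subalgebra. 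Summing over all pairs $X,X'$: the pairs with $\mathrm{dist}(X,X')\le 2\rho$ all have $X\cup X'$ within distance $\caO(\rho)$ of $Z$ (since $X$ meeting the $2\rho$-neighbourhood of $S'$ and $X\subset S$ forces $X$ near $S\cap S'=Z$, up to $\caO(\rho+R)$), so these are precisely the terms retained in $[\caI(G_{Z^{r}}),\caI(G'_{Z^{r}})]$ once $r$ is a suitable multiple of $\rho$; the pairs with $\mathrm{dist}(X,X')> 2\rho$ each contribute $\caO(\rho^{-\infty})$. The number of pairs is at most polynomial in $L$, but since the per-pair bound is $\caO(\rho^{-\infty})$ and we want the final answer as $\caO(r^{-\infty})$ with constants independent of $L$, I would choose $\rho = c r$ for small $c$ and absorb the polynomial-in-$L$ count into the super-polynomial decay — this is where one must be slightly careful, and it is the standard trick that $r^{-\infty}$ beats any fixed power of $L$ provided one tracks that $S,S'$ themselves have size at most $\caO(L^2)$. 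The second identity, $[\caI(G),G']=[\caI(G_{Z^r}),G'_{Z^r}]+\caO(r^{-\infty})$, is handled the same way: only $\caI(G)$ needs smearing, $G'$ is already strictly local, so one replaces $\caI(G_X)$ by its restriction and uses disjointness of supports directly.

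I expect the main obstacle to be bookkeeping rather than conceptual: precisely quantifying how the geometric condition ``$X$ within $2\rho$ of $X'$, $X\subset S$, $X'\subset S'$'' implies ``$X,X'\subset Z^{Cr}$'' for an appropriate $C$, and making sure the accumulated errors, summed over a number of index pairs that is polynomial in $L$, still collapse to a single $\caO(r^{-\infty})$ with $L$-independent constants. Both of these are routine given the Lieb-Robinson machinery and Lemma \ref{lem:I is inverse}, so the proof should be short; I would also remark that essentially this estimate is what underlies Lemma \ref{lem: locality of lr}, and could simply be cited in that direction if a fully detailed argument were undesirable.
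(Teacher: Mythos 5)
Your proposal is correct and follows essentially the same route as the paper, whose entire proof is the one-line instruction to split $G,G'$ into their local terms and apply Lemma \ref{lem:I is inverse} (v) and (vi); you have simply carried out the pair-by-pair bookkeeping (near pairs retained in $Z^r$, far pairs killed by the $\caO(\rho^{-\infty})$ tails, polynomial-in-$L$ counts absorbed into the superpolynomial decay) that the paper leaves implicit. No further comparison is needed.
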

\begin{proof}
We split the local Hamiltonians in local terms and use Lemma \ref{lem:I is inverse} (v) and (vi).
\end{proof}
\subsection{Proof of Lemma  \ref{lem: existence of lr}: Thermodynamic limit}  \label{sec: proof of thermo}

Let us denote the quantity in 
\eqref{eq:linear response coefficient} without limits as
$$
 \chi(\epsilon,L) :=  \iu \: \int_0^{\infty} \dd t \;  \omega \left( [V(-t), J] \right) \ep{\iu \nu t-\epsilon t}, 
$$
dropping hence $V,J,\nu$ from the notation.  We keep in mind that $J,V$ are independent of $L$ (see Section \ref{sec: spaces}). 
We now proceed in three steps.
\begin{lemma} \label{lem: l to infty}
For any $\epsilon>0$, the following exists
$$
\chi(\epsilon,\infty) :=\lim_{L\to\infty}\chi(\epsilon,L)
$$
\end{lemma}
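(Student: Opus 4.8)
The goal is to show that for fixed $\epsilon>0$ the finite-volume Kubo integrals $\chi(\epsilon,L)$ converge as $L\to\infty$. The strategy is to insert the quasi-adiabatic machinery of Section \ref{sec: preliminaries} to rewrite the time integral as an $L$-independent algebraic expression up to small errors, and then invoke Assumption \ref{tl limit}. Concretely, I would first argue that the integrand $t\mapsto \omega\left([V(-t),J]\right)\ep{\iu\nu t-\epsilon t}$ decays fast enough in $t$, uniformly in $L$, so that the $t$-integral is controlled; here the Lieb-Robinson bound gives $\norm{[V(-t),J]}\le C\min(1,\caO(|t|^{-\infty})\cdot\text{something})$ but that alone is not integrable near $t=0$, so the $\ep{-\epsilon t}$ factor and boundedness of $\omega([V(-t),J])$ by $2\norm{V}\norm{J}$ handle small $t$, while for large $t$ one needs a bound that does \emph{not} come from Lieb-Robinson alone but from the spectral gap.

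\textbf{Key steps.} Step one: replace the naive time integral by the regularized resolvent/spectral representation. Using the gap and the function $W$ (or a similar kernel) from Section \ref{sec: preliminaries}, one writes $\iu\int_0^\infty \dd t\,\omega([V(-t),J])\ep{\iu\nu t-\epsilon t}$ in terms of $\omega$ applied to an operator built from $V$, $J$ and $\mathrm{ad}_H$ restricted to the range of the off-diagonal projection. More precisely, only $\bar V$ (equivalently, the off-diagonal part) contributes to the commutator expectation because $\omega([V(-t),J])=\omega([\bar V(-t),\bar J])$ by Lemma \ref{lem:I is inverse}(i) together with $V(-t)=\ep{-\iu tH}V\ep{\iu tH}$; on the complement of the ground state the operator $\mathrm{ad}_H$ is bounded below by $g$, so $\ep{\iu tH}\bar V\ep{-\iu tH}$ oscillates with frequencies $\ge g$, and since $|\nu|\le g/2$ the combination $\iu\int_0^\infty \dd t\,\ep{\iu(\nu+\mathrm{ad}_H)t-\epsilon t}$ converges to a bounded operator-valued expression uniformly in $\epsilon$. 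This realizes $\chi(\epsilon,L)=\omega_L\big(F_{\epsilon,L}\big)$ for some operator $F_{\epsilon,L}$. Step two: show $F_{\epsilon,L}$ is, up to $\caO(L^{-\infty})$, equal to a fixed finite-support operator $F_{\epsilon}$ independent of $L$ — this is exactly the content of the locality lemmas (Lemma \ref{lem: locality of lr}, Lemma \ref{lem: locality with cai}, Lemma \ref{lem:I is inverse}(v)): the quasi-adiabatic kernel localizes everything near the intersection of the supports of $V$ and $J$, which is $L$-independent for large $L$ by the embedding convention of Section \ref{sec: spaces}. Step three: apply Assumption \ref{tl limit} to $\omega_L(F_\epsilon)$, which converges; the $\caO(L^{-\infty})$ corrections vanish. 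Hence $\chi(\epsilon,\infty)$ exists.

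\textbf{Main obstacle.} The delicate point is the uniform-in-$L$ control of the time integral \emph{before} localization, i.e.\ justifying that one may commute the $t$-integral with the various manipulations and that the error terms really are $\caO(L^{-\infty})$ uniformly in $\epsilon$ on the relevant range. The Lieb-Robinson bound alone does not make the integrand integrable at large $t$ (the commutator norm does not decay in $t$ for fixed operators at generic times), so the argument genuinely needs the gap: one should think of the integral as $\langle \bar J, (\nu+\epsilon\iu/\iu - \mathrm{ad}_H/\iu)^{-1}\bar V\rangle$-type object and use $\abs{\mathrm{ad}_H}\ge g$ on off-diagonal operators to bound the resolvent uniformly in $\epsilon\in(0,1]$. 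Once that representation is in hand, the remaining steps are routine applications of the preliminaries. I would therefore devote the bulk of the write-up to carefully establishing the spectral representation $\chi(\epsilon,L)=\omega_L(F_{\epsilon,L})$ with $\norm{F_{\epsilon,L}}\le C$ and $F_{\epsilon,L}$ localized near $\mathrm{supp}(V)\cap\mathrm{supp}(J)$ up to $\caO(L^{-\infty})$, and keep the invocation of Assumption \ref{tl limit} brief.
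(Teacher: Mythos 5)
Your route can be pushed through, but it is far heavier than what this lemma needs, and the reasoning you give for why the heavy machinery is required contains a genuine error. You claim that the $\ep{-\epsilon t}$ factor together with the trivial bound $|\omega([V(-t),J])|\leq 2\norm{V}\norm{J}$ only "handles small $t$", and that the large-$t$ tail needs the spectral gap. This is backwards: for \emph{fixed} $\epsilon>0$ the integrand is dominated for \emph{all} $t\geq 0$ by the $L$-independent integrable function $2\norm{V}\norm{J}\,\ep{-\epsilon t}$, so there is no large-$t$ problem at all and the gap plays no role in this lemma. The gap only becomes essential when one sends $\epsilon\downarrow 0$, which is the content of the two subsequent lemmas, not of this one. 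The paper's proof is correspondingly short: for each fixed $t$, the limit $\lim_{L\to\infty}\omega_L([V(-t),J])$ exists by Assumption \ref{tl limit} combined with the Lieb--Robinson bound (which lets one approximate $V(-t)$ by finite-support observables, uniformly in $L$), and then dominated convergence with the dominating function above gives convergence of the $t$-integral. The pointwise-in-$t$ step is the one ingredient your write-up never isolates, yet it is the only place where Assumption \ref{tl limit} and locality are actually needed here.

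Your alternative — representing $\chi(\epsilon,L)=\omega_L(F_{\epsilon,L})$ via a resolvent/quasi-adiabatic kernel, localizing $F_{\epsilon,L}$, and then invoking Assumption \ref{tl limit} — is essentially the argument the paper uses for Lemma \ref{lem: ep then l to zero} (the $\epsilon=0$ case), where it is genuinely unavoidable. If you insist on it here, note two points of care: (a) the off-diagonal part $\bar V$ involves the ground-state projection $P$, which is not local, so localization of your $F_{\epsilon,L}$ does not follow directly from Lemma \ref{lem:I is inverse}(v); you should instead localize the time integral $\int_0^\infty \dd t\,\ep{\iu\nu t-\epsilon t}V(-t)$ directly via Lieb--Robinson, where the localization length scales like $1/\epsilon$ (harmless at fixed $\epsilon$, but this shows the construction cannot be uniform in $\epsilon$ without the gap); (b) "a fixed operator $F_\epsilon$ independent of $L$" must be interpreted through the embedding convention of Section \ref{sec: spaces} together with an approximation argument letting the support grow after $L\to\infty$. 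Both are manageable, but they buy you nothing over the two-line dominated convergence proof.
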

\begin{proof}
Indeed, for any finite $t$, the $\lim_L \omega \left( [V(-t), J] \right)$ exists by Assumption \ref{tl limit} and locality of dynamics (Lieb-Robinson bound), and it is bounded by $\norm{V}\norm{J}$. Consequently, the limit of the $t$-integral exists by dominated convergence. 
\end{proof}
We now state a lemma that expresses the main point, in the sense that one should not expect it to be true if the system were not gapped.
\begin{lemma}\label{lem: ep to zero}
The limit $\chi(L)= \lim_{\epsilon \downarrow 0} \chi(\epsilon,L)$ exists and (for some $L$-independent $C$)
$$
|\chi(\epsilon,L)-\chi(L)| \leq C \epsilon,\qquad \text{for $\epsilon\leq g$}
$$
\end{lemma}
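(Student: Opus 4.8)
The plan is to exploit the gap assumption exactly as in the construction of the map $\caI$. Write $f_L(t) = \omega_L([V(-t),J])$ and note that, since $V$ is the generator of a perturbation and $J$ is the observable, we can use the spectral decomposition of $\mathrm{ad}_H$ with respect to the ground state $\Psi$. More precisely, I would first massage the integrand so that only off-diagonal parts appear: using $\omega([V(-t),J]) = \omega(V(-t)J) - \omega(JV(-t))$ and inserting $P + P^\perp$, the ground-state expectation picks out terms where $V(-t)$ and $J$ have opposite "diagonal/off-diagonal" character in a way that, after using Lemma \ref{lem:I is inverse}(i), reduces everything to matrix elements $\langle\Psi, V \Psi_k\rangle \langle \Psi_k, J\Psi\rangle$ (and its conjugate) for excited states $\Psi_k$, each multiplied by $e^{\iu(E_k - E_0)t}$ with $E_k - E_0 \geq g$.

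Concretely, I expect $f_L(t)$ to be a finite sum (at fixed $L$) of terms $a_k \ep{\iu \omega_k t}$ with $|\omega_k| \geq g$ and $\sum_k |a_k| \leq 2\norm{V}\norm{J}$. Then $\chi(\epsilon,L) = \iu \int_0^\infty f_L(t) \ep{\iu\nu t - \epsilon t}\,\dd t = \iu \sum_k \frac{a_k}{\epsilon - \iu(\nu + \omega_k)}$. Since $|\nu| \leq g/2$ and $|\omega_k|\geq g$, we have $|\nu + \omega_k| \geq g/2 > 0$, so each summand is analytic in $\epsilon$ near $\epsilon = 0$ with derivative bounded by $|a_k|/(g/2)^2$. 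Summing, $\chi(L) := \iu\sum_k \frac{a_k}{-\iu(\nu+\omega_k)}$ exists and $|\chi(\epsilon,L) - \chi(L)| \leq \epsilon \sum_k \frac{|a_k|}{(\nu+\omega_k)^2 - \text{(lower order)}}$; being slightly careful with the elementary bound $\left|\frac{1}{\epsilon - \iu b} - \frac{1}{-\iu b}\right| = \frac{\epsilon}{|b|\,|\epsilon - \iu b|} \leq \frac{\epsilon}{|b|^2}$ when... actually $|\epsilon - \iu b| \geq |b|$, so this gives $\leq \epsilon/|b|^2 \leq 4\epsilon/g^2$. Hence $|\chi(\epsilon,L) - \chi(L)| \leq \frac{4}{g^2}\Big(\sum_k |a_k|\Big)\epsilon \leq \frac{8\norm{V}\norm{J}}{g^2}\epsilon$, which is the claimed bound with $C = 8\norm{V}\norm{J}/g^2$, uniform in $L$. (The restriction $\epsilon \leq g$ in the statement is only needed so one may write $C\epsilon$ rather than a bounded function; it plays no essential role.)

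Alternatively, and perhaps cleaner to write, I would avoid diagonalizing and instead argue directly: $\chi(\epsilon,L) - \chi(0^+,L)$ — where $\chi(0^+,L)$ is defined via the convergent oscillatory integral against the distributional kernel — can be written using the resolvent $(\mathrm{ad}_H - \nu + \iu\epsilon)^{-1}$ restricted to the off-diagonal subspace, on which $|\mathrm{ad}_H - \nu| \geq g/2$, so the resolvent is norm-bounded by $2/g$ and its $\epsilon$-derivative by $4/g^2$. Either route reduces the whole lemma to the elementary resolvent estimate $\|(\mathrm{ad}_H - \nu + \iu\epsilon)^{-1} - (\mathrm{ad}_H - \nu)^{-1}\| \leq \frac{4}{g^2}\epsilon$ on that subspace, with $C$ absorbing $\norm{V}\norm{J}$ and a factor from the number of terms / the $L^1$-norm of $W$.

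The main obstacle is bookkeeping rather than anything deep: one must carefully justify that the $t$-integral can be evaluated term by term (trivial at finite $L$ since the sum over excited states is finite) and, more importantly, that the "naive" object $\chi(L)$ one writes down is genuinely the $\epsilon \downarrow 0$ limit and not merely a formal expression — this is where the uniform spectral gap is used in an essential way, since without it the denominators $\nu + \omega_k$ could accumulate at $0$ as $L\to\infty$ and neither the limit nor the uniform Lipschitz bound would survive. Once the gap is invoked to bound all denominators below by $g/2$ uniformly in $L$, the estimate is immediate.
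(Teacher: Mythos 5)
Your proof is correct and follows essentially the same route as the paper: the paper writes $\chi(\epsilon,L)$ in terms of the resolvents $((H\pm\nu)-\iu\epsilon)^{-1}$ sandwiched between $P^\perp$'s and uses the gap to bound the difference from the $\epsilon=0$ expression, which is exactly your ``alternative'' resolvent argument and is equivalent, at finite $L$, to your explicit spectral sum over excited states with $|\nu+\omega_k|\geq g/2$. The constant you obtain, $C=8\norm{V}\norm{J}/g^2$, is uniform in $L$ for the same reason as in the paper, namely the uniform gap of Assumption \ref{tilde gap}.
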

\begin{proof}
Computing
$$
\int_0^{\infty} \dd t \;\overline V(-t) \ep{\iu \nu t - \epsilon t} = -\iu \; \left( \frac{1}{-(H + \nu) - \iu \epsilon} P^{\perp} V P + P V P^{\perp} \frac{1}{(H - \nu) - \iu \epsilon} \right)
$$
and using Lemma \ref{lem:I is inverse} (i) we find
$$
\chi(\epsilon, L) = \omega \left( V P^{\perp} \frac{1}{(H - \nu) - \iu \epsilon} P^{\perp} J \right) - \omega \left( J P^{\perp} \frac{1}{-(H + \nu) - \iu \epsilon} P^{\perp} V \right).
$$
Since $\nu$ is smaller than $g/2$, half the gap of $H$, the limit is obviously the same expression with $\epsilon=0$ and the difference from the limit is, by functional calculus, bounded by ($\pm \nu$ corresponding to the two terms above)
$$
2\norm{V}\norm{J} || {P^\perp}  \frac{\epsilon}{(H\pm\nu)(H\pm\nu-\iu \epsilon) }  {P^\perp}  ||  \,  \leq \,  \frac{C\epsilon}{g} (1+\frac{4\epsilon^2}{{g}^2} )  \norm{V}\norm{J}.
$$

\end{proof}

\begin{lemma}\label{lem: ep then l to zero}
The limit $\lim_{L\to\infty}\chi(L)$ exists.
\end{lemma}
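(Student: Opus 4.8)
The plan is to use the explicit closed form for $\chi(L)$ obtained at the end of the proof of Lemma \ref{lem: ep to zero}, namely
$$
\chi(L) = \omega_L \left( V P^{\perp} \frac{1}{H - \nu} P^{\perp} J \right) - \omega_L \left( J P^{\perp} \frac{1}{-(H + \nu)} P^{\perp} V \right),
$$
and to rewrite it in a manifestly local way so that Assumption \ref{tl limit} applies. The natural vehicle is the map $\caI$ from Section \ref{sec: preliminaries}: since $\nu$ lies strictly inside half the gap, one can slightly perturb the construction of the kernel $W$ to produce an operator $\caI^{\pm\nu}$ satisfying $\mathrm{ad}_H\, \caI^{\pm\nu}(\bar O) = \iu \bar O \mp \nu \,\caI^{\pm\nu}(\bar O)$ on off-diagonal operators — equivalently $\caI^{\pm\nu} = \sqrt{2\pi}\,\widehat W_{\pm\nu}(\mathrm{ad}_H)$ for a shifted multiplier that still decays like $|t|^{-\infty}$ and still only "sees" the spectral subspace $|\mathrm{ad}_H|\geq g/2$. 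Using Lemma \ref{lem:I is inverse}(i) to insert/remove the $P^\perp$'s, this gives
$$
\chi(L) = -\iu\,\omega_L\big([\caI^{-\nu}(\bar V), J]\big)
$$
(possibly up to relabeling the two terms), which is the $\nu$-dependent analogue of the identity $\chi_{J,V} = \iu\,\omega([\caI(J),V])$ quoted in Lemma \ref{lem: locality of lr}.

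With this representation in hand, the argument is the same as for Lemma \ref{lem: existence of lr}'s first step. First I would invoke Lemma \ref{lem:I is inverse}(v) and (vi) (the Lieb–Robinson-type decay of $\caI^{-\nu}$) to replace $\caI^{-\nu}(\bar V)$ by its restriction to a ball $S^\rho$ around the support of $V$ at the cost of $\caO(\rho^{-\infty})$ uniformly in $L$; then $[\caI^{-\nu}(\bar V)_{S^\rho}, J]$ is an operator of finite support, uniformly in $L$ for $L$ large. Hence by Assumption \ref{tl limit} the limit $\lim_{L\to\infty}\omega_L\big([\caI^{-\nu}(\bar V)_{S^\rho}, J]\big)$ exists for each fixed $\rho$, and since the tail error is $\caO(\rho^{-\infty})$ uniformly in $L$, the sequence $\chi(L)$ is Cauchy; letting $\rho\to\infty$ identifies the limit. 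Combined with Lemma \ref{lem: ep to zero} (which gives $|\chi(\epsilon,L)-\chi(L)|\leq C\epsilon$ uniformly in $L$) and Lemma \ref{lem: l to infty}, this yields $\lim_{\epsilon\to0^+}\lim_{L\to\infty}\chi(\epsilon,L) = \lim_{L\to\infty}\lim_{\epsilon\to0^+}\chi(\epsilon,L)$, finishing Lemma \ref{lem: existence of lr}.

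The one technical point that needs care — the main obstacle — is justifying the existence of the shifted kernel $W_{\pm\nu}$ with simultaneously (a) rapid decay in $t$ and (b) the correct action on the gapped off-diagonal subspace, for all $|\nu|<g/2$; equivalently, one must check that $\zeta\mapsto \widehat W(\zeta)$ can be chosen so that the relevant resolvent $(\mathrm{ad}_H-\nu)^{-1}$ restricted to $|\mathrm{ad}_H|\geq g/2$ is the Fourier multiplier of a Schwartz-tail function. This is standard (it is exactly the kind of construction in \cite{HastingsWen,Sven}, applied to the shifted gap $[g/2-\nu,\infty)$ which is still bounded away from $0$), but it is where the hypothesis $|\nu|<g/2$ is genuinely used. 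An alternative that sidesteps building a new kernel entirely: expand $\frac{1}{H-\nu}P^\perp = \sum_{k\geq 0}\nu^k (H^{-1}P^\perp)^{k+1}$, which converges in norm on $\mathrm{Ran}\,P^\perp$ since $\|\nu (H)^{-1}P^\perp\|\leq 2|\nu|/g<1$, write each $(H^{-1}P^\perp)^{k+1}$ via iterated $\caI$'s, truncate at order $K$ with $\caO((2|\nu|/g)^K)$ error, and apply the locality estimates term by term; then take $K\to\infty$ after $L\to\infty$. Either route works; I would present the shifted-kernel version as cleaner.
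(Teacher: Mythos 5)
Your proposal is correct and follows essentially the same route as the paper: rewrite $\chi(L)$ as $\omega_L$ of a commutator with a quasi-local, frequency-shifted $\caI$-type operator, approximate that operator by finite-support ones with $\caO(\rho^{-\infty})$ tails uniform in $L$, and invoke Assumption \ref{tl limit} plus a Cauchy-sequence argument. The one point you single out as the main technical obstacle --- constructing the shifted kernel --- is handled in the paper simply by replacing $W(t)$ with $W(t)\,\ep{\iu \nu t}$, whose Fourier multiplier is the translate $\widehat W(\cdot+\nu)$ and still agrees with the required resolvent on the off-diagonal subspace because $|\mathrm{ad}_H|\geq g$ there while $|\nu|\leq g/2$.
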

\begin{proof} We use the language of Section \ref{sec: preliminaries}, in particular we consider the operator
$\mathrm{ad}_H$ acting on a Hilbert space. Since the spectrum of $\mathrm{ad}_H + \nu$ contains no points other than zero that are smaller than $g/2$ (remember that $\abs{\nu} \leq g/2$), we find that
$$
\chi(L) = \iu \, \int \dd t \; W(t) \; \ep{\iu t \nu} \omega([{V}(-t), J]).
$$
with the function $W$ defined in Section \ref{sec: preliminaries}. The operator $\int \dd t \; W(t) \; \ep{\iu t \nu} {V}(-t)$ can be well-approximated by local operators, by the same reasoning as in the proof of Lemma \ref{lem:I is inverse} ($v$). The claim consequently follows by Assumption \ref{tl limit} and dominated convergence.
\end{proof}

Lemma \ref{lem: existence of lr} now follows directly by combining Lemmata \ref{lem: l to infty}, \ref{lem: ep to zero} and \ref{lem: ep then l to zero}. 

\subsection{Proof of Lemma \ref{lem: locality of lr}  }

In the course of the proof in Section \ref{sec: proof of thermo}, we have in particular obtained
\begin{equation}
\chi_{J,V} = \iu \, \omega([\caI( V),J]).
\end{equation}
The locality now follows directly from Lemma \ref{lem: locality with cai}.

\subsection{Proof of Lemma~\ref{mhw}} \label{sec: proof of derivative}

Starting from (\ref{def:adiabatic response}), it was shown in \cite{bachmann2017adiabatic} that
$$
\chi^{\mathrm{ad}}_{J,H} = \iu \, \omega( [\caI (K), J]),\qquad K=K_{s=0}.
$$
 We now connect RHS of (\ref{eq:LR form of adiabatic response}), lets call it $\chi$, to this expression. Lemma~\ref{lem:I is inverse} says that the operation $\caI$ is an inverse of $- i \mathrm{ad}_H$ when restricted to an appropriate space. In particular using points (i) and (iv) of the lemma we get that 
$$
\omega([\caI (O(-t)), O'])
$$
is a primitive function of $\omega([O(-t), O'])$ for any observables $O, O'$. Integrating the expression (\ref{eq:LR form of adiabatic response}) for $\chi$ by parts we obtained
$$
\chi = - \iu \lim_{\epsilon \to 0^+} \int_0^\infty \dd t   \; (1 - \epsilon) \ep{-\epsilon t} \omega \left( [\caI(W(-t)), J] \right).
$$
By the same arguments that were used to prove the existence of thermodynamic limit, the part with $\epsilon$ vanishes in the limit. Noting that $\caI(W) = K$ and integrating by parts again we get
$$
\chi = \iu \lim_{\epsilon \to 0^+} \omega \left( [\caI(K), J] \right) - \iu \lim_{\epsilon \to 0^+} \int_0^\infty  \dd t   \; \epsilon \, \ep{-\epsilon t} \omega \left( [\caI(K(-t)), J] \right).
$$
The second part again vanishes in the limit and we obtain $\chi = \chi^{\mathrm{ad}}_{J,H}$. 


\subsection{Proof of Lemma \ref{lem: curvature gauge} }

From ~\eqref{eq:H under gauge} we see that the projections $P'(\phi)$ are related to $P(\phi)$ through the gauge transformation $U(\phi) = \ep{\iu \langle \phi_1 \theta_1 + \phi_2 \theta_2, n \rangle}$, therefore
$$
\partial_{1, 2} P'(\phi) = U(\phi) \big( \iu [ \langle \theta_{1, 2}, n \rangle, P(\phi)] + \partial_{1, 2} P(\phi) \big) U(\phi)^*.
$$

Let's write ${V}_{1, 2} = \langle \theta_{1, 2}, n \rangle$, then the adabatic curvature for the family of projections $P'(\phi)$ is (all derivatives at $\phi=0$)
\begin{align*}
\iu \; \omega' \left( [\partial_1 P', \partial_2 P'] \right) &= \iu \; \omega \left( [\partial_1 P, \partial_2 P] \right) \\
&- \omega \left( [[{V}_1, P], \partial_2 P] \right) \\
&- \omega \left( [\partial_1 P, [{V}_2, P]] \right) \\
&- \iu \; \omega \left( [[{V}_1, P], [{V}_2, P]] \right).
\end{align*}

The first term on the right-hand side is the adiabatic curvature $\kappa$ of the family $P(\phi)$, it remains to show that the other three terms vanish. The fourth term is $\iu \, \omega \left( [{V}_1, {V}_2] \right)=0$ by the same algebra as in \eqref{eq: algebra parallel}, because $[{V}_1, {V}_2]=0$. We show now why the second term vanishes up to $\caO(L^{-\infty})$ (the third term is analogous).  
We have
\begin{equation}\label{eq: proof of gauge decomp}
\omega \left( [[{V}_1, P], \partial_2 P] \right) = \iu \; \omega \left( [{V}_1, \caI \big( \partial_2 H_A \big)] \right) = \iu \sum_{x}  \theta_1(x) \; \omega \left( [n_x, \caI \big( \partial_2 H_A \big)] \right).
\end{equation}
For each $x$, we consider a region $\Sigma^x$ of diameter $\caO(L)$ centered on $x$. In this region, we have $A=d(\phi_1 f_1+\phi_2f_2)$ for some $f_{1,2}$ and  hence 
$$
(\partial_2 H_A \big)_{\Sigma^x}= ([\langle f_2,n\rangle, H \big)_{\Sigma^x}.
$$
In the last expression, we changed $H_A\to H$ as the derivative was at  $\phi=0$. 
Because of locality of $\caI$ and the boundedness of $A$, we have
$$
[n_x, \caI \big( \partial_2 H\big)]=  [n_x, \caI \big( [\langle f_2,n\rangle, H]\big)] + \caO(L^{-\infty}).
$$
Now, 
$$
\omega \left( [n_x, \caI \big( [\langle f_2,n\rangle, H]\big)]  \right) =  \omega \left( [n_x,  {{\langle f_2,n\rangle}}]\right)  =0 
$$
where we used Lemma \ref{lem:I is inverse} (i), (ii) and (iv). The claim is proven by plugging this into \eqref{eq: proof of gauge decomp}.

\subsection{Proof of Theorem \ref{thm:accurate} }

We start from 
\begin{equation}
\chi = \iu \, \omega([\caI(\bar V),J]),
\end{equation}
Because the function $W$ in the definition of $\caI$ is odd, we have also 
\begin{equation}\label{eq:linear response intermediate}
\chi = -\iu \, \omega([\bar V,\caI(J)]).
\end{equation}
Using Lemma \ref{lem:I is inverse} (iv), we then obtain
\begin{equation}\label{eq: double i}
\chi  =  -\omega([\caI([H,\bar V]),\caI(J)])= -\omega([\caI([H, V]),\caI(J)]) 
\end{equation}
The intersection of the supports of $[H,V]$ and $J$ is contained in 
$$Z:= \{ |x_1| \leq \ell+C, |x_2| \leq C \},$$
see Figure \ref{fig:thm43_figure}. Since $[H,V]$ and $J$ are clearly `local Hamiltonians', we can apply Lemma \ref{lem: locality with cai} to conclude that \eqref{eq: double i} equals
\begin{equation} \label{eq: first app of cai}
[\caI([H, V]),\caI(J)] = [\caI([H, V]_{Z^r}),\caI(J_{Z^r})] +\caO(r^{-\infty}).
\end{equation}
Now, let us approach from a different angle and consider the vector potential 
$$
A=\phi_1 A_1+\phi_2A_2
$$
where 
\begin{enumerate}
\item $A_1=dv$ on $\{|x_1| \leq \ell+2r\}^e$ and $A_1=0$ elsewhere. Here $v$ was defined just above Theorem \ref{thm:accurate}. 
\item $A_2=d h_2$ on $\{ |x_2 \leq C|\}^e$ and $A_2=0$ elsewhere, with $h_2$ the Heaviside function $h_2(x)=1(x_2>0)$.
\end{enumerate}
The most relevant properties of $A$ are that
\begin{equation} \label{eq: identification a}
[H, V]_{Z^r} =   \iu ( \partial_{\phi_1} H_{ A})_{Z^r},\qquad  J_{Z^r}= (\partial_{\phi_2} H_{ A})_{Z^r}
\end{equation}
see \eqref{eq: ha locally gauge} and \eqref{eq: current as region current}. 
Additionally, the intersection of the supports of $\partial_1 H_A$ and $\partial_2 H_A$ (derivatives at $\phi=0$ is also contained in $Z$ and these are also local Hamiltonians, so Lemma \ref{lem: locality with cai} applies here as well.  Combining this fact with \eqref{eq: first app of cai} and \eqref{eq: identification a}, we conclude that
$$
[\caI([H, V]),\caI(J)]=  \iu  [\caI((\partial_{\phi_2} H_{ A})),\caI((\partial_{\phi_1} H_{ A}))]  +\caO(r^{-\infty}) 
$$

\begin{figure}[h]
\centering
\includegraphics[width=0.2\textwidth]{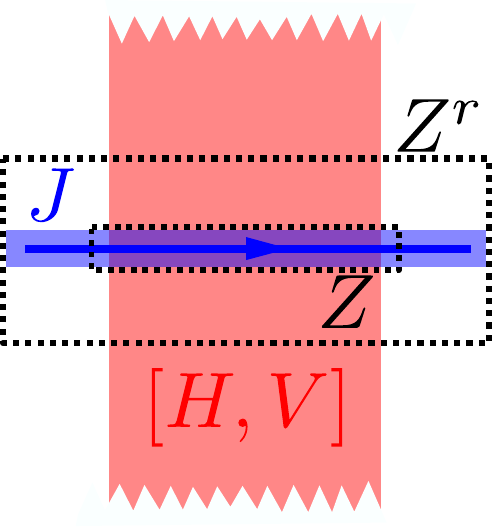}
\caption{The supports of $[H, V]$ and $J$, and the sets $Z$ and $Z^r$ in a neighbourhood of the location where the current is measured.}
\label{fig:thm43_figure}
\end{figure}

%
%
%
%
%
%
The expression on the right is almost of the type as appeared in the definition of adiabatic curvature, except that there we demanded that $A$ threads fluxes $\phi_1,\phi_2$. In our situation, $\phi_1 A_1$ threads a flux $\phi$, but $\phi_2 A_2$ threads a flux $2\phi_2 \ell E = \phi_2 \Delta v$.  This shows that the $\omega(\cdot)$ of the above commutator is given by 
$
 (\Delta v) \kappa
$ instead of $\kappa$. This proves Theorem \ref{thm:accurate}.  

\subsection{Proof of Lemma \ref{thm:accurate even more}}

The same as above, but with different vector potentials $A_1',A_2'$ that are however related to $A_1,A_2$ by gauge transformations.

\subsection{Proof of Lemma \ref{thm:1}}

Lemma \ref{thm:1} uses $\ell$ and $d\leq \ell$. We prove the lemma for $d=\ell$. This suffices because, if $d < \ell$ then we modify the potential $v$ by making it flat for $|x_1|\geq d$. By the locality estimate \ref{lem: locality of lr}, this changes the response coefficient by $\caO(1)$, which is compatible with the claim of the lemma. 
Now to the argument for $d=\ell$.  Theorem \ref{thm:accurate} applies to our situation, with the modification that the path $\gamma$ in  $J_{\gamma}$ has length $2(\ell+r)$, whereas we need a shortened path $\gamma'$ of length $2\ell$.  However,  $J_{\gamma'}=J_{\gamma} +\caO(r) $ and this difference gives a contribution of order $\caO(r)$ in the response coefficient. This follows indeed from the representation in \ref{lem: locality of lr} and the bound in Lemma \ref{lem:I is inverse}(vi).  Upon division by $\ell$ we get the desired claim.

\subsection{Proof of Theorem \ref{thm: link adiabatic and kubo}}

We start from the expression (Section \ref{sec: proof of derivative})
$$
\chi^{\mathrm{ad}}_{J,H} = \iu \, \omega(  [\caI (K), J]),\qquad K=K_{s=0}.
$$
By Lemma \ref{lem: adiabatic generators} and the definition of $W$, we have  $K=\caI(\partial_s H_s |_{s=0})= \caI(W) $, so that
$$
\chi^{\mathrm{ad}}_{J,H} =  \iu \, \omega(  [\caI (\caI(W)), J])= - \iu \, \omega(  [ \caI(W),\caI( J)])
$$
From \eqref{eq: gauge in adiabatic response}, we know that  $W_\Sigma=\iu [V,H]_\Sigma$. Since the observable $J$ is supported far from $\Lambda \setminus \Sigma$, we invoke  Lemma \ref{lem: locality with cai} to get 
$$
\chi^{\mathrm{ad}}_{J,H} =  \omega(  [ \caI( [V,H]),\caI( J)]) +\caO(r^{-\infty})
$$
By Lemma \ref{lem:I is inverse} $(i),(ii),(iv)$, the right-hand side equals $-\iu \, \omega(  [ V,\caI( J)])= \iu \, \omega(  [ \caI(V),J]) $, which was to be proven.

\section{Appendix}

We provide the necessary definitions for the framework of discrete one-forms $A$ on $\Gamma$. 

\subsection{The vector field of one-forms}
Let $\Gamma^{e} := \{  (x, y) \in \Gamma^2 \; : \; x \sim y \}$ be the set of oriented edges of $\Gamma$. For any oriented edge $(x, y)$, let $\overline{(x, y)} = (y, x)$ be the reversed edge. A \emph{one-form} is a function $A : \Gamma^{e} \rightarrow \bbR$ such that $A(e) = -A(\overline e)$.

%

\subsection{Integration of one-forms along paths}

For an oriented edge $(x, y)$ we define $i((x, y)) = x$ and $f((x, y)) = y$. An \emph{oriented path} in $\Gamma$ is an ordered set of oriented edges $\gamma = (e_1, \cdots, e_N)$ such that $f(e_i) = i(e_{i+1})$ for $i = 1, \cdots N-1$. The \emph{integral of the one-form $A$ along the oriented path $\gamma$} is defined by
\begin{equation}
\int_{\gamma} A := \sum_{e \in \gamma} A(e).
\end{equation}


\subsection{Contractible loops and vortex free one-forms}

For any path $\gamma = (e_1, \cdots, e_N)$ we write $i(\gamma) = i(e_1)$ for the startingpoint and $f(\gamma) = f(e_N)$ for the endpoint of the path. A \emph{loop} is a path $\gamma$ for which $i(\gamma) = f(\gamma)$. We wish to classify loops as `contractible' or `non-contractible' in such a way that we recover the usual homology of the two-torus\footnote{Strictly speaking, the contractible loops give the first homotopy of the space, while we are interested in the homology. The natural setup to discuss homology is to work with simplicial complexes and $r$-chains. The first homology is then characterized by the 1-chains that have no boundary and are not the boundary of some 2-chain. Closed loops are very much like 1-chains without boundary, and being contractible implies being the boundary of a 2-chain. It is therefore clear that the non-contractible loops capture enough information to describe the homology of the torus.}.

One way of doing this is to think of the discrete torus $\Gamma$ as a subset of a smooth flat torus $\bbT^2$. We associate to each edge $(x, y)$ of $\Gamma$ a curve tracing the shortest path from $x$ to $y$ in the torus $\bbT^2$. To each path $\gamma$ we associate the curve obtained by concatenating the curves associated to the edges of $\gamma$. I this way, a closed curve in $\bbT^2$ is associated to each loop in $\Gamma$. We say that the loop $\gamma$ is contractible if its associated curve is contractible in $\bbT^2$.

A one-form $A$ is \emph{exact} in the region $\Sigma \subset \Gamma$ if $\oint_{\gamma} A = 0$ whenever $\gamma$ is a contractible loop in $\Sigma$.

Let $\theta : \Gamma \rightarrow \bbR$, then we define its exterior derivative to be
\begin{equation}
\dd \theta \big( (x, y) \big) = \theta(y) - \theta(x).
\end{equation}
$\dd \theta$ is exact in any subset of $\Gamma$, the integral of $\dd \theta$ vanishes along \emph{all} loops, even the non-contractible ones.

Conversely, if $A$ is exact in the region $\Sigma$, then there exist a function $\theta : \Sigma \rightarrow \bbR$ such that $A \big|_{\Sigma^e} = \dd \theta$. Indeed, pick a point $x_0$ in each connected component of $\Sigma$ and put $\theta(x_0) = 0$. For any other point $x$ that is path-connected to $x_0$, take any path $\gamma$ from $x_0$ to $x$ that lies in $\Sigma$ and define $\theta(x) = \int_{\gamma} A$. This definition is independent of the chosen path because $A$ is exact in $\Sigma$. Now, for any edge $(x, y) \in \Sigma^e$ we have
$$
\dd \theta \big( (x, y) \big) = \theta(y) - \theta(x) = \int_{\gamma_y} A - \int_{\gamma_x} A = \int_{\{(x, y)\}} A = A \big( (x, y) \big)
$$
where $\gamma_x$ and $\gamma_y$ are paths in $\Sigma$ from $x_0$ to $x$ and to $y$ respectively.




\bibliographystyle{unsrt}
\bibliography{Quant.bib}

\begin{thebibliography}{10}

\bibitem{Thouless85}
Q.~Niu, D.J. Thouless, and Y.-S. Wu.
\newblock Quantized {Hall} conductance as a topological invariant.
\newblock {\em Phys. Rev. B}, 31(6):3372, 1985.

\bibitem{AvronSeiler85}
J.E. Avron and R.~Seiler.
\newblock Quantization of the {Hall} conductance for general, multiparticle
  {Schr\"odinger Hamiltonians}.
\newblock {\em Phys. Rev. Lett.}, 54(4):259--262, 1985.

\bibitem{HastingsMichalakis}
M.B. Hastings and S.~Michalakis.
\newblock Quantization of {Hall} conductance for interacting electrons on a
  torus.
\newblock {\em Commun. Math. Phys.}, 334:433--471, 2015.

\bibitem{bru2016microscopic}
Jean-Bernard Bru and Walter de~Siqueira~Pedra.
\newblock Microscopic conductivity of lattice fermions at equilibrium. part ii:
  Interacting particles.
\newblock {\em Letters in Mathematical Physics}, 106(1):81--107, 2016.

\bibitem{Giuliani:2016gn}
A.~Giuliani, V.~Mastropietro, and M.~Porta.
\newblock Universality of the {Hall} conductivity in interacting electron
  systems.
\newblock {\em Commun. Math. Phys.}, 2016.

\bibitem{bachmann2018adiabatic}
Sven Bachmann, Wojciech De~Roeck, and Martin Fraas.
\newblock The adiabatic theorem and linear response theory for extended quantum
  systems.
\newblock {\em Communications in Mathematical Physics}, pages 1--31, 2018.

\bibitem{bachmann2018quantization}
Sven Bachmann, Alex Bols, Wojciech De~Roeck, and Martin Fraas.
\newblock Quantization of conductance in gapped interacting systems.
\newblock 19(3):695--708, 2018.

\bibitem{simon2014statistical}
Barry Simon.
\newblock {\em The statistical mechanics of lattice gases}, volume~1.
\newblock Princeton University Press, 2014.

\bibitem{Hofstadter76}
D.R. Hofstadter.
\newblock {Energy levels and wave functions of Bloch electrons in rational and
  irrational magnetic fields}.
\newblock {\em Phys. Rev. B}, 14(6):2239--2249, 1976.

\bibitem{hastings2017stability}
MB~Hastings.
\newblock The stability of free fermi hamiltonians.
\newblock {\em arXiv preprint arXiv:1706.02270}, 2017.

\bibitem{DeRoeck2018}
Wojciech De~Roeck and Manfred Salmhofer.
\newblock Persistence of exponential decay and spectral gaps for interacting
  fermions.
\newblock {\em Communications in Mathematical Physics}, Jul 2018.

\bibitem{Kubo:1957cl}
R.~Kubo.
\newblock Statistical-mechanical theory of irreversible processes. {I. General}
  theory and simple applications to magnetic and conduction problems.
\newblock {\em J. Phys. Soc. Japan}, 12(6):570--586, 1957.

\bibitem{bachmann2017adiabatic}
Sven Bachmann, Wojciech De~Roeck, and Martin Fraas.
\newblock Adiabatic theorem for quantum spin systems.
\newblock {\em Physical review letters}, 119(6):060201, 2017.

\bibitem{bachmann2018many}
Sven Bachmann, Alex Bols, Wojciech De~Roeck, and Martin Fraas.
\newblock A many-body index for quantum charge transport.
\newblock {\em arXiv preprint arXiv:1810.07351}, 2018.

\bibitem{HastingsWen}
M.B. Hastings and X.-G. Wen.
\newblock Quasiadiabatic continuation of quantum states: The stability of
  topological ground-state degeneracy and emergent gauge invariance.
\newblock {\em Phys. Rev. B}, 72(4):045141, 2005.

\bibitem{Sven}
S.~Bachmann, S.~Michalakis, S.~Nachtergaele, and R.~Sims.
\newblock Automorphic equivalence within gapped phases of quantum lattice
  systems.
\newblock {\em Commun. Math. Phys.}, 309(3):835--871, 2012.

\bibitem{bru2016lieb}
J-B Bru and Walter de~Siqueira~Pedra.
\newblock {\em Lieb-Robinson bounds for multi-commutators and applications to
  response theory}, volume~13.
\newblock Springer, 2016.

\bibitem{Teufel17}
D.~Monaco and S.~Teufel.
\newblock Adiabatic currents for interacting electrons on a lattice.
\newblock {\em arXiv preprint arXiv:1707.01852}, 2017.

\bibitem{BrunoInPrep}
B.~Nachtergaele, R.~Sims, and A.~Young.
\newblock Stability of gapped phases of fermionic lattice systems.
\newblock {\em In preparation}.

\end{thebibliography}

\end{document}